\newcommand{\hooklongrightarrow}{\lhook\joinrel\longrightarrow}
\newtheorem{thrm}{Theorem}
\numberwithin{thrm}{section}
\newtheorem{prop}{Proposition}
\numberwithin{prop}{section}
\newtheorem{lemma}{Lemma}
\numberwithin{lemma}{section}
\newtheorem{defn}{Definition}
\numberwithin{defn}{section}
\newtheorem{expl}{Example}
\numberwithin{expl}{section}
\theoremstyle{remark}
\newtheorem{rmk}{Remark}
\numberwithin{rmk}{section}
\numberwithin{equation}{section}
\NewDocumentCommand{\dbkt}{sO{}m}{%
	\IfBooleanTF{#1}
	{\dbktext{#3}}
	{\dbktx[#2]{#3}}%
}
\NewDocumentCommand{\dbktext}{m}{%
	\sbox0{%
		\mathsurround=0pt 
		$\left\{\vphantom{#1}\right.\kern-\nulldelimiterspace$%
	}%
	\sbox2{\{}%
	\ifdim\ht0=\ht2
	\{\kern-.625\wd2 \{#1\}\kern-.625\wd2 \}%
	\else
	\left\{\kern-.7\wd0\left\{#1\right\}\kern-.7\wd0\right\}%
	\fi
}
\NewDocumentCommand{\dbktx}{om}{%
	\sbox0{\mathsurround=0pt$#1\{$}%
	\sbox2{\{}%
	\ifdim\ht0=\ht2
	\{\kern-.625\wd2 \{#2\}\kern-.625\wd2 \}%
	\else
	\mathopen{#1\{\kern-.7\wd0 #1\{}
	#2
	\mathclose{#1\}\kern-.7\wd0 #1\}}
	\fi
}
\title{Quantum Systems as Lie Algebroids}
\author{Ronald J. Ezuck\footnote{email: rjezuck@gmail.com}}
\begin{document}
	
	\maketitle
	
	\begin{abstract}
		Lie algebroids provide a natural medium to discuss classical systems, however, quantum systems have not been considered. It is the aim of this paper is to attempt to rectify this situation. Lie algebroids are reviewed and their use in classical systems is described. The geometric structure of the Schr\"{o}dinger and Heisenberg representations of quantum systems is examined and their relationship to Lie algebroids is explored. Geometrically, a quantum system is seen to be a collection of bounded, linear, self-adjoint operators on a Hilbert, or more precisely, a K\"{a}hler manifold. The geometry of the Heisenberg representation is given by the Poisson structure of the co-adjoint orbits on the dual of the Lie algebra. Finally, it is shown that the Schr\"{o}dinger and Heisenberg representations are equivalent.
		\\
		
	\end{abstract}
	\textit{2010 Mathematics Subject Classification}:  53C15, 53D05, 53Z05, 70H05 (Primary), 70G45, 81P05, 81Q70 (Secondary)\\
	PACS. 02.40.-k, 02.40.Yy\\
	\textit{Keywords}: Symplectic geometry, Hamiltonian systems, Lie algebroids, Quantum Mechanics, Hilbert spaces\\

	\newpage
	
\section{Introduction}
Lie Algebroids have been used to study classical systems for some time \cite{online:grabowskadiracalgebroids,online:geommechanonalgebroids,online:hamiltdynoonliealgpreservvol,article:martinex1999,article:geoformogmechonliealgebroids,article:liealggeomandhamformal-popescu}. However, Lie algebroid formulations of quantum mechanics are notably absent. There have been approaches using deformation quantization with Lie algebroids \cite{online:quantpoissonalgassocliealgbroids-landsman,woodhouse_1994} but, otherwise, there has been little work done on the topic. This is unfortunate because much of the structure of quantum mechanics is revealed through Lie algebroids. The aim of this paper is to take some small steps to correct this situation and cast quantum mechanics as a Lie algebroid.
\\
\\
Loosely speaking, a Lie algebroid is a vector bundle over a manifold whose sections form a Lie algebra and possesses a bundle morphism, the anchor, that maps the sections onto the tangent bundle over the manifold \cite{book:mackenzieliegrpoidsliealgbroidsindiffgeom,book:mackenziegenthryliegrpoidsliealgbroids,article:pradines1,article:pradines2} and that satisfies the Leibniz rule.
\\
\\
In classical mechanics the approach is to consider the tangent bundle, in the case of Lagrange dynamics, or the cotangent bundle in the case of Hamiltonian mechanics \cite{book:abrahammarsdenfoundmech,online:hamiltliealgebroids-blomann-weinstein,online:lagrangesubmnfdsdynamonliealgebroids}. We will consider only Hamiltonian mechanics in this paper. The cotangent bundle forms a phase space consisting of the configuration and momentum space. This space also possesses a two-form that is symplectic. A state is then a point in this phase space. An observable is given by a function, the Hamiltonian, and is used to generate the sections of the bundle. These sections form a Lie algebra. The evolution is given by the anchor and is defined by Hamilton's equations. This, in turn, gives a vector space in the tangent bundle and generates a one-parameter group of diffeomorphisms, the flow. A measurement is obtained from the resulting integral distribution at a point.
\\
\\
In order to describe quantum mechanics within the framework of Lie algebroids, quantum mechanics must be given a geometric structure. Quantum mechanics is not normally thought of as a geometrical theory. Historically, the idea that classical and quantum systems could share the same complex coordinates was discussed by Strocchi \cite{article:complxcoordandqm-strocchi}. Kibble is considered as the first attempt to cast quantum mechanics as having a geometrical structure by introducing the notion of a quantum phase space \cite{article:geomofqm-kibble}. A more physical interpretation was given by Heslot \cite{article:qmasaclassthry-heslot}. Subsequent work begins with a quantum state as consisting of a vector in Hilbert space. Since two quantum states are equivalent up to a constant phase, $\phi \sim c \psi$, a physical quantum state is given by a line in a projective space. It is further observed that the Hilbert space can be recast as a K\"{a}hler manifold with a Hermitian inner product consisting of the sum of a symmetric inner product and an imaginary symplectic inner product \cite{online:geomformofqm,article:geomformofqm-clemente-gallardo,online:geomformofqm-heydari}.
\\
\\
There are a two standard ways of representing quantum mechanics that we will consider, the Schr\"{o}dinger and the Heisenberg pictures\footnote{There are more representations of quantum mechanics, for example the Dirac or interaction picture. This is a combination of the Schr\"{o}dinger and Heisenberg representations, and, for purposes of space, will not be discussed here.}. In the Schr\"{o}dinger representation the Hilbert space is replaced with a similar structure, a projective K\"{a}hler bundle. Observables are introduced as a constraint space of sections and form the kinematic foundation called the quantum phase space. An observable is given by an operator that acts on the quantum phase space and is given by the Hamiltonian. Taking the expectation value gives the Hamiltonian function. The evolution is determined by the anchor consisting of the Schr\"{o}dinger equation applied to the Hamiltonian function of the observable. The sections, in turn, form a Lie algebra and the bundle is principal. This generates an vector subspace of the tangent bundle with evolution operator given by the unitary transformations, $U_t = e^{iHt}$. Solutions are given by eigenvalues and eigenvectors of the observable operator.
\\
\\
In the Heisenberg representation, the emphasis is on the measurement value. The state remains fixed and the observables evolve. In this case, the observables form a $C^*-Algebra$ and act on the initial state. The geometrical structure is given by the orbits of the dual space of Hermitian operators  \cite{online:geomofqsys_denandentangl-grabowski-etal} with the Poisson product. This results in a principal fiber bundle over the unitary group. The Hamiltonian sections are given by the momentum (or moment) map. The evolution is given by the Heisenberg equation and generates a Heisenberg vector space. A measurement is obtained by calculating the eigenvalues and eigenvectors from this vector space.
\\
\\
A little more detail concerning observables is in order. In classical systems, a state is a point in the phase space and an observable is a function, $f$, that selects a particular state, $f(q,p) \in T(Q)$. In a sense, a classical observable has the measurement apparatus built in and the result is obtained with certainty. This can be attributed to the ability of classical systems to be measured by non-invasive measurement apparatus \cite{inprep:measspaceinliealgeb}. In quantum systems, for example in the Schr\"{o}dinger representation, an observable is a self-adjoint, linear operator representing a measurement apparatus that takes a state to a state with a probability. The measurement apparatus in the quantum case cannot be minimized. The inclusion of the measurement process is one of the characteristic difference between classical and quantum mechanics.
\\
\\
The organization of the paper is as follows. We will consider pure states in a finite-dimensional Hilbert space. This is not a severe restriction as mixed states are combinations of pure states. In the second section, the general properties of Lie algebroids are reviewed. Classical Hamiltonian mechanics is described on the prolongation of a Lie algebroid.
\\
\\
The third section sets the stage to place a quantum system within the Lie algebroid structure. We lay the groundwork by reviewing the geometric construction of the Schr\"{o}dinger and Heisenberg representations of quantum mechanics. Quantum mechanics is defined on a projective Hilbert space which is expressed as a K\"{a}hler manifold. The state space is then represented as a principal fiber bundle and provides the kinematic foundation. This process generates the quantum phase space. In the case of the Schr\"{o}dinger representation, the anchor is given by Schr\"{o}dinger equation and takes sections of the projective space to a Hilbert space and describes the evolution. The evolution operator is given by unitary transformations, $U_t = e^{iHt}$. Solutions are obtained by calculating the eigenvalues of the observable operator.
\\
\\
In the Heisenberg representation, the fibers are bounded, self-adjoint operators that form a $C^*$-algebra space and gives rise to a Lie algebra generated by the Unitary group. The geometry is obtained from the Poisson structure of the dual Lie algebra. The momentum (moment) map is the Hamiltonian and introduces the Lie algebra sections. The evolution of the system is given by the Heisenberg equation and generates a vector field contained in the associated tangent bundle with the measurement values obtained by calculating the eigenvalues and eigenvectors. The equivalence of the Schr\'{o}diger and Heisenberg interpretations is demonstrated. 
\\
\\
The final section discusses the results and explores next steps.
\section{Lie Algebroids}
Lie algebroids nicely delineates the structure of classical Hamiltonian processes. The cotangent bundle defines the kinematic structure by establishing the state space and defining the constraint space of observables. The sections are given by the Hamiltonian and form a natural Lie algebraic structure. The anchor specifies the evolution the system and provides for the measurement of a particular state.
\\
\\
We begin with some definitions. A Lie algebroid can be thought of as an extension of a tangent bundle. Formally, we have the definition \cite{book:cannasweinsteingeomodfornoncommalg,article:corteslangrangemechonliealgebroids,article:liberman_1996,book:mackenzieliegrpoidsliealgbroidsindiffgeom,book:mackenziegenthryliegrpoidsliealgbroids,article:langrangmechonliealgbroids,confproc:martinezliealgebroidinclassmech,article:weinsteinlagrngdynandliegrpds}:
\begin{defn}
	Let $\pi:E \rightarrow M$ be a vector bundle with $\Gamma[E]$ the global sections that form a Lie algebra, $\mathcal{A}$. $E$ is the total space with vector space fibers, $M$ is the base space, and $\pi : E \rightarrow M$, the standard projection. An \textbf{almost Lie Algebroid} \textnormal{\cite{online:pelletiergeomstructonprolong}}, $(E, \llbracket \cdot , \cdot \rrbracket , \rho)$, consists of a vector bundle and a map $\rho : \Gamma[E] \rightarrow \mathfrak{X} \subset T M$ called the \textbf{anchor map}, from the global sections, $\Gamma[E] = \mathcal{A}$ to the tangent bundle over $M$, $TM$, such that, for $v,w \in \Gamma[E]$:
	\begin{equation*}
		\rho \llbracket v,w \rrbracket = \llbracket \rho v, \rho w \rrbracket 
	\end{equation*}
	\begin{equation*}
	and \; 
	for \; f \in C^{\infty}, \; \llbracket v,fw \rrbracket = \rho(v) fw + f\llbracket v, w \rrbracket \quad (Leibniz Rule)
\end{equation*}
\end{defn}
\begin{defn}
	Let $\pi : E \rightarrow M$ be an almost Lie Algebroid. If it also satisfies the Jacobi identity
	\begin{equation*}
	\llbracket u,\llbracket v,w \rrbracket \rrbracket + \llbracket v,\llbracket w,u \rrbracket \rrbracket + \llbracket w, \llbracket u,v \rrbracket \rrbracket = 0
	\end{equation*}
	then it is a \textbf{Lie Algebroid}.
\end{defn}
\begin{rmk}
	We note that the anchor is a bundle homomorphism, $\rho : \Gamma(E) \rightarrow \mathfrak{X}$ between Lie Algebras, that is, $\rho : (\Gamma(E) , \llbracket \cdot , \cdot \rrbracket) \rightarrow (\mathfrak{X}(M), [ \cdot, \cdot ])$ \cite{online:lagrangesubmnfdsdynamonliealgebroids}.
\end{rmk}
\begin{rmk}
	If $\rho$ is fiber-wise surjective, $\rho$ is said to be \textbf{transitive}. If $\rho = 0$, $\rho$ is  \textbf{totally intransitive}. If the rank of $\rho$ is locally constant, $\rho$ is \textbf{regular}.
\end{rmk}
\begin{rmk}
	Let $\{e_{\alpha}\}$ be a local basis for the sections ($y = y^{\alpha} e_{\alpha}$) and $\{x^i\}$ a local coordinate system on $M$. Then $(x^i, y_{\alpha})$ forms a coordinate system on $E$. The anchor and the bracket are given by
	\begin{equation*}
	\rho(e_{\alpha}) = \rho^i_{\alpha} \frac{\partial}{\partial x^i} \qquad and \qquad \llbracket e_{\alpha}, e_{\beta} \rrbracket = C^{\gamma}_{\alpha \beta} e_{\gamma}
	\end{equation*}
	Applying the Leibniz and Jacobi identity conditions we have
	\begin{align*}\label{lastruct}
	\rho^j_{\alpha} \frac{\partial \rho^i _{\beta}}{\partial x^j} - &\rho^j_{\beta} \frac{\partial \rho^i _{\alpha}}{\partial x^j} = \rho^i_{\gamma} C^{\gamma}_{\alpha \beta}\\
	\rho^i_{\alpha} \frac{\partial C^{\nu}_{\beta \gamma}}{\partial x^i} + \rho^i_{\beta} \frac{\partial C^{\nu}_{\gamma \alpha}}{\partial x^i} + \rho^i_{\gamma} \frac{\partial C^{\nu}_{\alpha \beta}}{\partial x^i}& + C^{\mu}_{\beta \gamma}C^{\nu}_{\alpha \mu} + C^{\mu}_{\gamma \alpha}C^{\nu}_{\beta \mu} + C^{\mu}_{\alpha \beta}C^{\nu}_{\gamma \mu} = 0
	\end{align*}
	These are referred to as the structure equations.
\end{rmk}
\begin{expl}\textbf{Lie Algebra}\\
	A Lie Algebra over a point, $M = \{p\}$, is a Lie Algebroid.
\end{expl}
\begin{expl}\textbf{Tangent bundle}
	\\
	Let $TM = (TM,\llbracket \cdot, \cdot \rrbracket, \rho)$ and $\rho(x) = id$. Then it is trivially a Lie Algebroid. 
\end{expl}
\begin{expl}\textbf{Atiyah Lie Algebroid}
	\\
	Let $P$ be a principal bundle over a manifold $M$ with structure group, $G$. Then $TP / G$ is a Lie algebroid with a surjective anchor, $\rho$. The sections form a Lie algebra of $G$-invariant vector fields, $\mathfrak{g}$.
\end{expl}
\noindent
We have some standard constructions (\cite{online:lagrangesubmnfdsdynamonliealgebroids,online:pelletiergeomstructonprolong} and previous references).
\begin{defn}\label{eq:algdiff}
	Let (E,$\llbracket \cdot , \cdot \rrbracket$, $\rho$) be a Lie algebroid, $E_p$ a fiber over $p$. The \textbf{differential} of $E$, $d^E : \Gamma(\bigwedge^r E^*) \rightarrow \Gamma(\bigwedge^{r+1} E^*)$ is given by
	\begin{align}
	d^E \mu (X_0, \cdots , X_r) &= \sum_{i = 0}^{r} (-1)^i \rho(X_i)(\mu (X_0, \cdots , \hat{X_i}, \cdots , X_r)) \nonumber\\
	&+ \sum_{i < j} \mu (\llbracket X_i, X_j \rrbracket , X_0 , \cdots , \hat{X_i}, \cdots , \hat{X_j} , \cdots, X_r)
	\end{align}
	where $\mu \in \Gamma(\bigwedge^r E^*)$ and $X_0, \cdots  X_r \in \Gamma(E)$. The hat, $\hat{X_i}$, denotes the omission of $X_i$. We note that $(d^E)^2 = 0$.
\end{defn}
\noindent
If we have local coordinates we get
\begin{equation}
dx^i = \rho^i_{\alpha} e^{\alpha} \quad and \quad de^{\alpha} = -\frac{1}{2} C^{\alpha}_{\beta \gamma} e^{\beta} \wedge e^{\gamma}
\end{equation}
where $\{e^{\alpha}\}$ the dual of $\{e_{\alpha}\}$, the section basis.
\begin{defn}
	The \textbf{Lie derivative} of a function, $f$, on a Lie algebroid is given by $\mathcal{L}_X f = Xf$ \textnormal{\cite{book:frankelgeoofphysintro}}. The Lie derivative of a vector, $Y$, is $\mathcal{L}_X Y = [X,Y]$.
	\\
	\\
	The Lie derivative, $\mathcal{L} : \Gamma(\bigwedge^r E^*) \rightarrow \Gamma(\bigwedge^r E^*)$, of a differential form is given by
	\begin{align*}
	\mathcal{L}_X^E &= i_X \circ d^E + d^E \circ i_X\\
	&= X \rfloor d^E + d^E \rfloor X
	\end{align*}
\end{defn}
\noindent
We note that $\Gamma(E)$ is involutive, that is, for $X, Y \in \Gamma(E),\; \left[X,Y\right] \in \Gamma(E)$. Thus from the Froebenius Theorem we get a integrable submanifold, or \emph{integrable distribution}. The image is also a smooth involutive distribution \cite{online:mechcontrlsysonliealg}. This also provides a natural foliation on M \cite{online:lagrangesubmnfdsdynamonliealgebroids,article:sussmanvectfldsandintdist}. The dual bundle fibers admit a Poisson structure \cite{online:calconliealgliegrpandpoismnflds}, $\{\cdot , \cdot\} : C^{\infty}(E^*) \times C^{\infty}(E^*) \rightarrow C^{\infty}(E^*)$ and having the properties
\begin{align*}
\{f,g\} &= - \{g,f\}\\
\{fg,h\} &= f\{g,h\} + g\{f,h\}\\
\{f,\{g,h\} + &\{g,\{h,f\} + \{h,\{f,g\} = 0
\end{align*}
where $f,g,h \in C^{\infty}(E^*)$.
\\
\\
If $(q_i, p^j)$ are local coordinates where $\{e^j\}$ is a local basis for $E^*$ we have
\begin{equation*}
\{f,g\} = \rho^i_j\left(\frac{\partial f}{\partial q_i} \frac{\partial g}{\partial p^j} - \frac{\partial g}{\partial q_i}\frac{\partial f}{\partial p^j} \right) - C_{jk}^l p^l \frac{\partial f}{\partial p^j}\frac{\partial g}{\partial p^k}
\end{equation*}
\subsection{Prolongations}
One way Lie algebroids can be extended to satisfy a wider variety of applications is by defining a prolongation \cite{online:lagrangesubmnfdsdynamonliealgebroids,article:leokDiracStructAndHamJThryonLieAlg,book:mackenziegenthryliegrpoidsliealgbroids,article:martinex1999,article:geoformogmechonliealgebroids,confproc:martinezliealgebroidinclassmech,article:dualstruct-prolong-lalbroid-popescu,article:prolongofliegroupoidsalgebroids}. This will allow us to define second order differential equations (SODE).
\\
\\
Suppose we have a Lie algebroid with vector bundle, $\tau: E \rightarrow M$ and an anchor map $\rho: \Gamma(E) \rightarrow TM$. Suppose, further, that we have a map $f: M' \rightarrow M$ such that $TM' \rightarrow M'$. Now, we can define $E \times TM'$ such that, for $(b,v), \; b \in \Gamma(E), \; v \in TM', \; \rho(b) = T_f(v)$, we will call this a prolongation \cite{article:higginsmackenziealgconstinlieaglebroids}. A prolongation can be thought of as a tangent bundle to the total space $M \times E$. Formally, we have
\begin{defn}\label{def:prolongspace}
	Let  $\pi: E \rightarrow M$ be a Lie Algebroid and let $\tau : P \rightarrow M$ be a fibration. For $p \in P$ define a vector space by
	\begin{equation*}
	\mathfrak{L}P_p = T_p^E P = \{(b,v) \in E_x \times T_p P \; \vert \; \rho(b) = T_p \tau(v)\}
	\end{equation*}
	$T_p : TP \rightarrow TM$ is the tangent map of $P$ at $p$. This is the fiber of the \textbf{prolongation} bundle. Here  $\mathfrak{L}P$ is a bundle over $P$.
\end{defn}
Notice that the tangent is with respect to the point p. In effect, we have
\begin{equation*}
	\mathfrak{L}P_p = T_p^E P = \{(p,b,v) \in E_p \times E_x \times T_p P \; \vert \; \rho(p) = \rho (b), \; \rho(b) = T_p \tau(v)\}
\end{equation*}
where $v \in T_pE$. This forms a vector bundle, $\mathcal{T}^E P = \bigcup_{p \in P} \mathfrak{L}P_p$. The projection is
given by $\pi_1(p,b,v) = p$.
\begin{defn}\label{def:prolongproj}
	We can define $\pi_2 : \mathfrak{L}E \rightarrow E$ as the projection onto the second factor, i.e., $\pi_2(p,b,v) = b$.  The projection of the third factor, $\pi_3(p,b,v)$, is the anchor of the Lie algebroid, $\mathfrak{L}E \rightarrow \mathcal{T}E,\; \rho(p,b,v) = v$. Finally we have $\pi_{12}: (p,b,v) \mapsto (p,b)$.
\end{defn}
	A section of the Lie algebroid vector bundle, $\Sigma$ of $\mathcal{T}^{E}P$, is given by 
\begin{equation*}
	\Sigma(p) = (p, \sigma(\tau(p)), X(p))
\end{equation*}
where $\sigma$ is a section of $E$.
For two sections, $\Sigma_1, \Sigma_2 \in \mathcal{T}^{E}P$, the Lie product is
\begin{equation*}
	[\Sigma_1, \Sigma_2] = (p, [\sigma_1, \sigma_2](p),[X_1,X_2](p)) ,\quad p \in E
\end{equation*}
A section of $\mathcal{T}E$, $\Sigma \in \mathcal{T}E$, is said to be \textbf{projectable} if it is projected onto a section $\sigma \in E$. That is, $\pi_2 \circ \Sigma = \sigma \circ \pi$.
We have then:
\begin{prop}
	The prolongation forms a Lie Algebroid.
\end{prop}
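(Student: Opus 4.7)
My plan is to verify the three ingredients of a Lie algebroid on $\mathcal{T}^E P$ in turn: the vector bundle structure, the anchor, and the bracket; then check the compatibility axioms componentwise, exploiting that both $E$ and $TP$ are already known to be Lie algebroids.

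First I would check that $\mathcal{T}^E P \to P$ is genuinely a vector bundle. For each $p \in P$ the fiber $\mathfrak{L}P_p$ is cut out of $E_{\tau(p)} \oplus T_p P$ by the \emph{linear} condition $\rho(b) = T_p\tau(v)$, so it is a vector subspace. Local triviality follows by choosing simultaneously a local frame $\{e_\alpha\}$ for $E$ over a neighborhood of $\tau(p)$ and a local frame for $TP$: a standard dimension count using the rank of $\rho$ and of $T\tau$ gives local coordinates on $\mathcal{T}^E P$, so it is a vector bundle of constant rank on the connected components where those ranks are locally constant. The anchor is then the smooth map $\rho^{\mathfrak{L}}:\mathcal{T}^E P \to TP,\ (p,b,v)\mapsto v$, i.e.\ $\pi_3$ from Definition \ref{def:prolongproj}.

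Next I would define the bracket. The projectable sections $\Sigma = (\sigma\circ\tau, X)$, with $\sigma \in \Gamma(E)$ and $X \in \mathfrak{X}(P)$ satisfying $T\tau \circ X = \rho(\sigma)\circ\tau$, generate $\Gamma(\mathcal{T}^E P)$ as a $C^\infty(P)$-module (this is the standard fact about prolongations and can be checked in the local frame above, writing an arbitrary section as a $C^\infty(P)$-combination of projectable ones coming from a basis $\{e_\alpha\}$ of $E$). On such projectable generators I set
\begin{equation*}
[\Sigma_1,\Sigma_2] \;=\; \bigl([\sigma_1,\sigma_2]\circ\tau,\; [X_1,X_2]\bigr),
\end{equation*}
and then extend to arbitrary sections by forcing $\mathbb{R}$-bilinearity and the Leibniz rule $[\Sigma_1, f\Sigma_2] = \rho^{\mathfrak{L}}(\Sigma_1)(f)\,\Sigma_2 + f[\Sigma_1,\Sigma_2]$. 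I first need to check that the right-hand side above really lies in $\Gamma(\mathcal{T}^E P)$, which amounts to
\begin{equation*}
T\tau\bigl([X_1,X_2]\bigr) \;=\; \rho\bigl([\sigma_1,\sigma_2]\bigr)\circ\tau;
\end{equation*}
this is exactly the anchor-bracket compatibility applied to the two Lie algebroids $E$ and $TP$, together with projectability of $X_1,X_2$.

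The axioms then fall out componentwise. Compatibility $\rho^{\mathfrak{L}}[\Sigma_1,\Sigma_2] = [\rho^{\mathfrak{L}}\Sigma_1,\rho^{\mathfrak{L}}\Sigma_2]$ is tautological on projectable sections (both sides equal $[X_1,X_2]$) and then propagates via the Leibniz extension. The Jacobi identity is inherited from the Jacobi identities on $\Gamma(E)$ and on $\mathfrak{X}(P)$ one slot at a time. The Leibniz rule holds by construction for the extension; what needs verifying is only that the extension is \emph{well defined}, i.e.\ independent of how an arbitrary section is expressed as a $C^\infty(P)$-combination of projectable ones. This is the main obstacle and the place I expect to spend the most effort: I would handle it by the usual trick of showing that any two such expressions differ by a relation whose image under the bracket vanishes, which ultimately reduces to the Leibniz rule and the compatibility $\rho[\sigma_\alpha,\sigma_\beta] = [\rho\sigma_\alpha,\rho\sigma_\beta]$ already available in $E$. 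With well-definedness in hand, all the algebroid axioms follow from those of $E$ and $TP$, and $\mathcal{T}^E P$ is a Lie algebroid over $P$.
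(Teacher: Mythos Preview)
Your proposal is correct and follows the same strategy as the paper: the anchor is $\pi_3$ and the bracket is the componentwise one on projectable sections, exactly as set up just before the proposition. The paper's own proof is only a two-line sketch pointing to these definitions, so your version is considerably more detailed---in particular, your attention to the vector-bundle structure, the $C^\infty(P)$-module generation by projectable sections, and the well-definedness of the Leibniz extension fills in steps the paper simply asserts.
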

\begin{proof}
	From of above definition (Def. \ref{def:prolongspace}) $\mathcal{T}^E P = \bigcup_{p \in P} \mathfrak{L}P_p$ forms a bundle with $\Sigma$ forming a Lie algebra.\\
	\\
	The anchor is given by $\rho : \mathcal{T}^{E}P \rightarrow TE$ where $\rho(p,b,v) = \pi_3(p,b,v) = v$.
\end{proof}
\begin{expl}
	Suppose $E$ is the tangent bundle, $TE$, then $\mathcal{T}^EE$ is isomorphic to $T(TE)$.
\end{expl}
\noindent
\begin{rmk}
	A prolongation is a form of double bundle \cite{book:mackenziegenthryliegrpoidsliealgbroids}.
\end{rmk}
\begin{rmk}
	In Lagrange dynamics $\mathfrak{L}$ takes the form $\tau : T(TM) \rightarrow TM$ \cite{article:langrangmechonliealgbroids}.
\end{rmk}
\noindent
Let $(x^i, y^j)$ be local coordinates on $P$ with a local basis $\{e_j\}$ of sections in $E$. We can define a local basis $\{ \tilde{X}_{\alpha}, \tilde{Y}_{A}\}$ for sections of $\mathcal{T}^E P$ by
\begin{equation}\label{eq:prolongcoord}
\tilde{X}_{\alpha}(p) = \left(p, e_{\alpha}(\pi(p)), \rho^i_{\alpha} \frac{\partial}{\partial x^i}\bigg\vert_p\right) \quad and \quad \tilde{Y}_{A} (p) = \left( p, 0, \frac{\partial}{\partial y^{A}}\bigg\vert_p \right)
\end{equation}
We have the local coordinates ($(x^i, y^j)$ for $p$ and $(x^i, u^j)$ for b. If $z = (p,b,v) \in \mathcal{T}^E P$ where $b = z^{\alpha} e_{\alpha}$ we have for $v$
\begin{equation}
v = \rho^{i}_{\alpha} z^{\alpha} \frac{\partial}{\partial x^i} + v^{A} \frac{\partial}{\partial y^{A}}
\end{equation}
and $z$ has the form
\begin{equation}
z = z^{\alpha} x^{'}_{\alpha} (p) + v^A v^{'}_A (p)
\end{equation}
From this, a vertical element is a linear combination of $\{ v^{'}_A\}$.
\\
\\
Now, if $\eta$ is a section of the prolongation $T^EP$ with local coordinates $(x^i, y^j, u^j(x,y), z^j(x,y))$ it can be expressed as
\begin{equation*}
\eta = u^j \tilde{X}^j + z^j \tilde{Y}^j
\end{equation*}
The anchor map takes this to the vector space given by
\begin{equation*}
\rho(z) = \rho^i_j u^j(x,y) \frac{\partial}{\partial x^i}\bigg\vert_{(x,y)} + z^j(x,y) \frac{\partial}{\partial y^j}\bigg\vert_{(x,y)}
\end{equation*}
We have a number of structures on a prolongation.
\begin{defn}
	Let $z \in T_p^E P$,
	 $z$ is said to be \textbf{vertical} if it projects to zero. That is, $T\pi(z) = (p,0,v)$, where $v$ is the vertical vector at $T_p$. Let $f \in C^{\infty}$. The \textbf{vertical lift} is given by $f^v = f(\pi(w))$. The \textbf{complete lift} is given by $f^c (w) = \rho(w)(f)$, where $w \in E$. Similarly, for $\sigma$ a section in $E$. The vertical lift of $\sigma$ is the vector field $X$ on $E$ given by $X^v(w) = X(\pi(w))^v_w , w \in E$ \normalfont{\cite{article:geoformogmechonliealgebroids,online:vertcompliftslielagebroids-peyghan}}.
\end{defn}
\noindent
\begin{defn}
	A tangent vector, $v$ at $p$ is said to be \textbf{admissible} if $T_p \pi(v) = \tau(p)$. A curve is admissible if its tangent vectors are admissible.
\end{defn}
For a general Lie algebroid, we can replace the notion of a prolongation with that of an admissible curve \cite{article:langrangmechonliealgbroids}.

\subsection{Hamiltonian Formalism on Lie Algebroids}\label{sect:liealgham}
Recall that a Hamiltonian system is a cotangent bundle, $T^*(Q)$, over some configuration space $Q$. The cotangent bundle represents the possible states that a system can be in. This space forms a symplectic manifold through its Poisson structure that generates a non-degenerate two-form, called a symplectic form \cite{mcduff_salamon_1995}, $\omega$, such that $d \omega = 0$, that is, it is closed. In the case where the adapted coordinates of the cotangent bundle are $(q,p)$ with $\omega = \omega(q,p) = \omega_{ij}dq^i \wedge dp^j$ we have the \textbf{canonical phase space}, $\mathcal{M}$. The state of a classical system is a point in this canonical phase space.
\\
\\
Consider the Hamiltonian given by $H = \frac{p^2}{2m} + V(q)$. The equations of motion are given by
\begin{equation}
	\dot{q} = \frac {\partial H}{\partial p} \quad \dot{p} = - \frac {\partial H}{\partial q}
\end{equation}
The Poisson bracket is
\begin{equation}
	\{f,g\} = \frac{\partial f}{\partial q} \frac{\partial g}{\partial p} - \frac{\partial g}{\partial q} \frac{\partial f}{\partial p}.
\end{equation}
We have $\dot{x} = \{H,x\}$ for a Hamiltonian $H$ and Hamilton's equations become
\begin{equation}\label{eq:poissonham}
	\dot{q} = \{q,H\} \quad and \quad \dot{p} = \{p,H\}.\\
\end{equation}
This gives the evolution of the system and, as we shall see, is the anchor.
\begin{defn} Let $TQ$ be the canonical tangent (vector) space and $T^* Q$ the canonical cotangent space \textnormal{\cite{online:sardanashvilyintHamsys}}. Let $\omega(u,v)$ be a two-form. The kernel of the map, $ker\; \omega$, contains the zero-section, $\hat{\textbf{0}}$, of the tangent bundle $TQ \rightarrow Q$. If $ker\; \omega = \hat{\textbf{0}}$ then $\omega$ is non-degenerate. If, in addition, $\omega$ is closed, $d\omega = 0$ and $\omega$ is said to be \textit{symplectic}. $(M,\omega)$ with $\omega$ closed and non-degenerate is a \textbf{symplectic manifold}.
\end{defn}
An observable is given by a Hamiltonian that produces a moment map, $H: Q \rightarrow \mathfrak{g}^*$ from the configuration space to the dual of the Lie algebra of sections. A moment map generates a vector field defined by $X_H \rfloor \omega = dH$. This is the \textit{Hamiltonian vector field}. The Hamiltonian vector field, in turn, generates a one-parameter group of diffeomorphisms, $g_t$, called the \textit{flow} that describes the evolution of the system. Thus, a Hamiltonian selects a subspace of the total space.
\begin{defn}
	\textnormal{Let $\sigma$ be a section on $T^*Q$. If $\sigma_f \rfloor \omega$ is exact, then $\sigma_f$ is called \textbf{Hamiltonian}. Thus, for $f \in C(M)^{\infty}$, $\sigma_{f} \rfloor \omega = -df$ is a \textbf{Hamiltonian section} of $T^*Q$ with respect to $f$. Furthermore, since $T^*Q$ is a vector bundle, $\sigma_H$ is a global section. $H : T^*Q \rightarrow \mathbb{R}$ is called the \textbf{Hamiltonian function}. Applying the anchor we get the \textbf{Hamiltonian vector field} $X_H = \rho(\sigma_H)$ in the associated tangent bundle $TQ$.}
\end{defn}
On a canonical phase space this has a natural extension as a Lie Algebroid with the anchor map, $\rho : \Gamma(X) \rightarrow T(Q)$. As we will see, Hamilton's equations form the anchor map.
\\
\\
\noindent
\textbf{Example:} Let $H : T^*Q \rightarrow \mathbb{R}$ be a Hamiltonian defined on the canonical basis given by $(q_i,p^i)$, and define the \textit{Liouville} form by $\theta = p^i dq_i$. The canonical form is $d \theta = dp^i \wedge dq_j$. Then the Hamilton vector field is given by
\begin{align*}
X_H &= \{H, \cdot\}\\
 &= \partial^i H \partial_i - \partial_i H \partial^i \nonumber\\
\end{align*}
In phase space, $(p^i, q_i)$
\begin{equation*}
X_H = \sum_{i=1}^{n} \frac{\partial H}{\partial p_i} \frac{\partial}{\partial q_i} - \frac{\partial H}{\partial q_i} \frac{\partial}{\partial p_i} \label{eq:Hamvect}.
\end{equation*}
If, for $u,v \in X_H$, $\mathcal{L}_u v = 0$, then $X_H$ is the infinitesimal generator of a one-parameter group of diffeomorphisms and $X_H$ is said to be an \textit{integral manifold}. It is also a \textit{distribution}, and so, by Frobenius' Theorem forms a foliation, $\mathcal{F}$.
\\
\\
\noindent
We wish to describe this via a prolongation on a Lie Algebroid. To this end we note that a Hamiltonian system is defined on the dual to a prolongation, $T(T^*M)$ \cite{book:abrahammarsdenfoundmech,book:cartangeomandtheirsymm-crampin,article:corteslangrangemechonliealgebroids,online:grabowskadiracalgebroids,online:geommechanonalgebroids,article:liberman_1996,article:liealggeomandhamformal-popescu,article:weinsteinlagrngdynandliegrpds}.
\begin{defn}\label{def:symplcliealg}
	Let $(E^*, \llbracket \cdot, \cdot \rrbracket, \rho)$ be the dual to a Lie algebroid over a manifold, $M$, that has sections given by a non-degenerate, closed two-form, $\omega$. Such a section is called a \textbf{symplectic section}. A Lie algebroid possessing a symplectic section is called a \textbf{symplectic Lie algebroid}. 
\end{defn}
\noindent
Suppose we have a dual Lie Algebroid $(E^*, \llbracket \cdot , \cdot \rrbracket, \rho)$ and let $\tau$ be the bundle projection from $\tau : E^* \rightarrow M$. Suppose, further, that $\Gamma[E^*]$ is a $C^{\infty}(M)$-module of symplectic sections and let $\rho : \Gamma(E^*) \rightarrow \mathfrak{X}(M) \subset TM$ be the anchor map \cite{online:lagrangesubmnfdsdynamonliealgebroids}. Such a Lie Algebroid is said to be symplectic if the sections have a non-degenerate closed two-form, $\omega(x,y)$. Let $\mathfrak{L}^{\tau}_{E^*} = (\mathcal{T}E^*, \llbracket \cdot, \cdot \rrbracket, \rho)$ be a prolongation on $E^*$. We note that on a Lie algebroid, for $\omega \in \Gamma(E^*)$, $\omega$ is a non-degenerate two-form and that $d^{\mathfrak{L}^{\tau^*}_{E^*}} \omega = 0$ thus we have the following:
\begin{thrm}\label{thm:symprol}
	Let $(E^*, \llbracket \cdot , \cdot \rrbracket, \rho)$ be a dual Lie algebroid and $\mathfrak{L}^{\tau}_{E^*}$ be the associated prolongation. Then, if the Lie algebroid is symplectic, the prolongation is symplectic as well.
\end{thrm}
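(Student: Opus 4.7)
The plan is to produce a non-degenerate closed two-section on the prolongation $\mathfrak{L}^{\tau}_{E^*} = \mathcal{T}E^*$ directly from the symplectic section $\omega$ on $E^*$, by exploiting the fact that the second projection $\pi_2 : \mathcal{T}E^* \to E^*$ (Def.~\ref{def:prolongproj}) is a Lie algebroid morphism over the base. First I would set $\tilde{\omega} = \pi_2^{*}\omega$, i.e.\ define $\tilde{\omega}_{(p,b,v)}\bigl((b_1,v_1),(b_2,v_2)\bigr) = \omega_{\tau(p)}(b_1,b_2)$ for admissible pairs in the fiber $\mathfrak{L}P_p$. This is the most natural candidate since the admissibility constraint $\rho(b)=T_p\tau(v)$ already relates the $b$-part to the $v$-part, so the information in the $E^*$-directions fully determines the section.

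The closedness is then essentially automatic. Because $\pi_2$ is a morphism of Lie algebroids, pullback intertwines the algebroid differentials of Def.~\ref{eq:algdiff}: $d^{\mathcal{T}E^*}\circ\pi_2^{*} = \pi_2^{*}\circ d^{E^*}$. Since $\omega$ is symplectic, $d^{E^*}\omega=0$, and therefore $d^{\mathcal{T}E^*}\tilde{\omega} = \pi_2^{*}(d^{E^*}\omega) = 0$. This is precisely the remark flagged immediately before the theorem; I would include a one-line verification using the explicit formula in Def.~\ref{eq:algdiff} applied to $\pi_2$-projectable sections, checking that the anchor $\rho(p,b,v)=v$ acts consistently under the pullback.

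The main obstacle is non-degeneracy, since a bare pullback by a projection generally acquires a kernel in the vertical directions of $\mathcal{T}E^*$. To overcome this I would work in the local basis $\{\tilde{X}_\alpha,\tilde{Y}_A\}$ of (\ref{eq:prolongcoord}) and write $\omega = \tfrac{1}{2}\omega_{\alpha\beta}\,e^\alpha\wedge e^\beta$ on $E^*$. Unpacking $\tilde{\omega}$ in the dual basis $\{\tilde{X}^\alpha,\tilde{Y}^A\}$, the admissibility condition $\rho^i_\alpha z^\alpha = v^A \delta^i_A \cdots$ ties the $\tilde{Y}_A$-components to the $\tilde{X}_\alpha$-components so that the would-be degenerate vertical directions of $\pi_2^{*}\omega$ are precisely cut out by the prolongation constraint. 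Concretely, if $i_{(p,b,v)}\tilde{\omega}=0$ on $\mathcal{T}E^*$, then testing against arbitrary $\tilde{X}_\beta$ forces $\omega_{\alpha\beta}b^\alpha=0$ for all $\beta$, and non-degeneracy of $\omega$ gives $b=0$; the admissibility relation $\rho(b)=T_p\tau(v)$ together with the fiberwise rank argument then forces $v$ to lie in $\ker T_p\tau$, which by an identification with the vertical bundle, paired again through $\omega$, gives $v=0$. I expect this coordinate verification to be the only real work; the closedness and the identification with a Liouville-type one-section $\theta$ satisfying $\tilde{\omega}=-d^{\mathcal{T}E^*}\theta$ are formal. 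Finally I would note that the construction is natural, so it agrees with the canonical symplectic structure on $T(T^*Q)$ in the classical example discussed in Sect.~\ref{sect:liealgham}, which will serve as a consistency check.
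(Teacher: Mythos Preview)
Your proposal is far more detailed than the paper's own argument, which is literally one sentence: ``This follows immediately from the definitions (above) and the fact that the prolongation is defined on a symplectic section of the Lie algebroid.'' The paper treats the result as a triviality and does not construct or verify anything; in particular it does not build a candidate two-section on $\mathcal{T}E^*$, does not check closedness, and does not address non-degeneracy. So at the level of approach, you are doing strictly more work than the paper asks for.

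That said, your argument has a genuine gap in the non-degeneracy step. The pullback $\tilde{\omega}=\pi_2^{*}\omega$ evaluates as $\tilde{\omega}\bigl((b_1,v_1),(b_2,v_2)\bigr)=\omega(b_1,b_2)$, so any element of the form $(0,v)$ with $v\in\ker T_p\tau$ lies in its kernel. Your chain of reasoning correctly reaches ``$b=0$, hence $v$ is vertical,'' but the final clause ``paired again through $\omega$, gives $v=0$'' does not follow: $\pi_2^{*}\omega$ simply does not see the vertical $\tilde{Y}_A$-directions at all, since $\pi_2(\tilde{Y}_A)=0$ by (\ref{eq:prolongcoord}). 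The admissibility constraint $\rho(b)=T_p\tau(v)$ relates the horizontal part of $v$ to $b$, but places no restriction on the vertical part, so the would-be kernel is the full vertical subbundle and $\tilde{\omega}$ is degenerate with corank equal to $\operatorname{rank} E$. Any identification of the vertical bundle with the fiber of $E^*$ is extra structure you would have to feed in by hand; it is not encoded in $\pi_2^{*}\omega$.

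If you want an honest symplectic section on $\mathcal{T}E^*$, the object that actually works is the canonical Liouville-type section $\theta_0$ with $\omega_0=-d^{\mathcal{T}E^*}\theta_0$ that the paper writes down immediately \emph{after} the theorem; in the local basis this has the cross-term $\tilde{X}^i\wedge\tilde{V}_i$ that couples horizontal to vertical and is what makes it non-degenerate. That construction does not use any symplectic hypothesis on $E^*$ at all, which is consistent with the paper's proof being purely definitional.
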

\begin{proof}
	This follows immediately from the definitions (above) and the fact that the prolongation is defined on a symplectic section of the Lie algebroid.
\end{proof}
Let $(E,\pi,\rho)$ be a Lie algebroid defined over coordinates $(q,p)$. From our discussion above, it possesses a prolongation. The dual to the Lie algebroid, $(E^*,\tau,\rho)$ also admits a prolongation, $\tau_1 : \mathcal{T}^E E^* \rightarrow E^*$ that forms a Lie algebroid \cite{article:corteslangrangemechonliealgebroids,article:geoformogmechonliealgebroids}. Analogously to Definition \ref{def:prolongspace} the total space of the dual prolongation at a point $a$ is given by
\begin{equation*}
\mathcal{T}^EE = \{ (a,b,w) \in E^{*} \times E \times T^{*}E | \tau(a) = \pi(b), w \in T^{*}_aE \; and \; \rho(b) = T_a \tau(w)\}
\end{equation*}
the projection is given by $\tau_1(a, b, w) = a$. The sections have a basis given by (see Eq.\ref{eq:prolongcoord})
\begin{equation}
\tilde{X}_j(a) = \left( a, e_j, \rho^i_j \frac{\partial}{\partial q^i} \bigg\vert_a \right) \quad and \quad \tilde{V}_j(a) = \left( a, 0, \frac{\partial}{\partial p_j}\bigg\vert_a\right)
\end{equation}
where $(q^i,  p^j)$ are the coordinates form a section with basis $\{e_j\}$ \cite{article:geoformgeomechonliealhebroids}. Notice that $\tilde{V}$ is vertical.
\\
\\
From the definition (\ref{def:prolongproj}) the anchor map is given by $\rho(a,b,\omega) = w$. The Lie algebroid brackets are given by
\begin{equation*}
\llbracket \tilde{X}_i , \tilde{X}_j \rrbracket = C^k_{ij} \tilde{X}_k \quad \llbracket \tilde{X}_i, \tilde{V}^j \rrbracket = 0 \quad \llbracket \tilde{Y}_i, \tilde{V}^{j} \rrbracket = 0 
\end{equation*}
The vertical lift and the complete lift provide a natural basis for a section on the prolongation. Let $\sigma = \sigma^i e_i$ and $\theta_i e^i$ be the sections of $E$ and $E^*$ respectively. The vertical lift is given by $\theta^V = \theta_i \tilde{V}^{j}$, and the complete lift is
\begin{equation*}
\sigma^C = \sigma^i \tilde{X}_{i} - q_k \left( \rho^k_i \frac{\partial \sigma^k}{\partial q_k} + C^k_{ij} \sigma^j \right) \tilde{V}^{i}
\end{equation*}
Let $\eta$ be another complete lift, the brackets have the form
\begin{equation*}
\llbracket \sigma^C , \eta^C \rrbracket = \llbracket \sigma , \eta \rrbracket^C \quad \llbracket \sigma^C, \theta^V \rrbracket = (d_{\sigma} \theta)^V \quad \llbracket \tilde{V}_i, \theta^{V} \rrbracket = 0 
\end{equation*}
We can define a symplectic form on $\mathcal{T}^EE^{*} \rightarrow E^{*}$ by 
\begin{equation*}
\langle \theta_{0a}, (a,b,\omega)\rangle = \langle a,b \rangle
\end{equation*}
and $\omega_0 = -d\theta_0$. These have coordinates
\begin{equation*}
\theta_0 = q_i \tilde{X}^{i} \quad and \quad \omega_0 = \tilde{X}^{i}  \wedge \tilde{V}_i + \frac{1}{2} q_k C^k_{ij} \tilde{X}^{i} \wedge \tilde{X}^{j}
\end{equation*}
We now relate the anchor with the generation of a Hamiltonian vector field. Recall the definition of a Hamiltonian vector field.
\begin{defn}\label{def:hamsect}
	Suppose we have a function, $H: T^*Q \rightarrow \mathbb{R}$. A \textbf{Hamiltonian section}, $\sigma_{H}$, is then given by
	\begin{equation}
	\sigma_H \rfloor \omega = \iota_{\sigma_{H}} \omega = dH
	\end{equation}
	where $\omega$ is the symplectic two-form and $d$ is the differential operator \textnormal{\cite{article:corteslangrangemechonliealgebroids}}.
\end{defn}
\begin{rmk}
	For a Hamiltonian section, the \textbf{Hamiltonian vector field} is given by the anchor map
	\begin{equation}
	X_{H} = \rho(\sigma_{H})
	\end{equation}
\end{rmk}
\noindent
Applying definition (\ref{eq:algdiff}) to local coordinates we have
\begin{align*}
d^{E}f =& \frac{\partial f}{\partial q^i} \rho^{i}_{\alpha} e^{\alpha}\\
d^{E}\theta =& \left(\frac{\partial \theta_{\gamma}}{\partial q^{i}} \rho^{i}_{\beta} - \frac{1}{2} \theta_{\alpha} C^{\alpha}_{\beta \gamma}\right) e^{\beta} \wedge e^{\gamma}\\
\textnormal{thus} &\\
d^{E} q^{i} =& \rho^{i}_{\alpha} e^{\alpha},  \quad d^{E} e^{\alpha} = - \frac{1}{2} C^{\alpha}_{\beta \gamma} e^{\beta} \wedge e^{\gamma}\\
\end{align*}
For a function $H$ on $E^{*}$ we have a Hamiltonian section $\sigma_H$ of $\mathcal{T}^E E^{*}$ given by
\begin{equation*}
\sigma_H \rfloor \omega_0 = \iota_{\sigma_{H}}\omega_0 = \omega(\sigma_{H}, \cdot) = dH
\end{equation*}
the section is given by
\begin{equation*}
\sigma_H = \frac{\partial H}{\partial p_i} \tilde{X}_i - \left( \rho^k_i \frac{\partial H}{\partial q^k} + p_k C^k_{ij} \frac{\partial H}{\partial p_j} \right) \tilde{V}^{i}
\end{equation*}
and the anchor gives a (Hamiltonian) vector field $X_H = \rho(\sigma_H) \in \mathfrak{X}(E^{*})$,
\begin{equation*}
\rho(\sigma_H) = \rho^i_k \frac{\partial H}{\partial p_k} \frac{\partial}{\partial q_i} - \left( C^k_{ij}p_k \frac{\partial H}{\partial p_j} + \rho^k_i \frac{\partial H}{\partial q_k} \right) \frac{\partial}{\partial p_i}
\end{equation*}
The equations for the integral curves are then given by
\begin{equation}\label{eq:LieAlgHam}
\frac{dq^i}{dt} = \rho^i_k \frac{\partial H}{\partial p_k} \qquad \frac{dp_i}{dt} = - \rho^k_i \frac{\partial H}{\partial q^k} - p_k C^k_{ij} \frac{\partial H}{\partial p_j}
\end{equation}
The Poisson bracket has the form $\{F,G\} = -\omega (\sigma_F, \sigma_G) $.
\\
\\
When we restrict the underlying vector bundle to the canonical phase space we get the standard results as demonstrated in the following proposition:
\begin{prop}\label{prop:hamprolonphsspc}
	Let (q,p) be the canonical coordinates of the phase space of mechanics, $(x, y) \rightarrow (q,p)$, $(E, \llbracket \cdot, \cdot\rrbracket, \rho) \rightarrow (T^*Q, [\cdot,\cdot], id)$ and $H : T^*Q \rightarrow \mathbb{R}$ then we have
	\begin{equation}\label{eq:liealgbrdham2}
	\dot{q_i} = \frac{\partial H}{\partial p_i} \quad and \quad \dot{p_i} = - \frac{\partial H}{\partial q_i}
	\end{equation}
\end{prop}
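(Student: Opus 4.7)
The plan is to show that Proposition \ref{prop:hamprolonphsspc} reduces to a direct substitution once the canonical identifications are correctly spelled out. First I would make precise what it means to specialize to the canonical phase space: we take $M = Q$, $E = TQ$ with the ordinary Lie bracket of vector fields and the identity anchor, and choose a coordinate basis $e_i = \partial / \partial q^i$ of sections. With this choice the anchor components become $\rho^i_k = \delta^i_k$ and, since $[\partial/\partial q^i, \partial/\partial q^j] = 0$, the structure functions satisfy $C^k_{ij} = 0$. Under these identifications the prolongation $\mathcal{T}^E E^{*}$ is isomorphic to $T(T^*Q)$, the tautological section $\theta_0 = q_i \tilde{X}^i$ reduces to the Liouville one-form $p_i \, dq^i$, and $\omega_0$ reduces to the canonical symplectic form $dq^i \wedge dp_i$ on $T^*Q$.

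Next I would insert these simplifications into the general Lie algebroid Hamilton equations (\ref{eq:LieAlgHam}). The term $\rho^i_k \, \partial H/\partial p_k$ collapses to $\partial H / \partial p_i$, and the term $p_k C^k_{ij} \, \partial H/\partial p_j$ vanishes entirely, leaving $\rho^k_i \, \partial H/\partial q^k = \partial H/\partial q^i$ in the momentum equation. This yields
\begin{equation*}
\dot{q}_i = \frac{\partial H}{\partial p_i}, \qquad \dot{p}_i = -\frac{\partial H}{\partial q_i},
\end{equation*}
which are exactly the canonical Hamilton equations (\ref{eq:liealgbrdham2}).

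As a consistency check I would verify that the Hamiltonian section $\sigma_H$ given above reduces to the correct object under the same substitutions, namely $\sigma_H = (\partial H/\partial p_i)\, \tilde{X}_i - (\partial H/\partial q^i)\, \tilde{V}^i$, and that applying the identity anchor to this recovers the standard Hamiltonian vector field $X_H = (\partial H/\partial p_i)\, \partial/\partial q^i - (\partial H/\partial q^i)\, \partial/\partial p_i$ displayed earlier in the example following Definition \ref{def:symplcliealg}.

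There is no real obstacle here: the content of the proposition is essentially a sanity check that the abstract machinery of prolongations on Lie algebroids recovers standard Hamiltonian mechanics on $T^*Q$. The only point requiring minor care is justifying the vanishing of $C^k_{ij}$, which follows from the choice of coordinate basis together with the identity anchor on $TQ$; once this is in place, the proof is a line-by-line substitution into (\ref{eq:LieAlgHam}).
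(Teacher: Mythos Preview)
Your proposal is correct and follows essentially the same approach as the paper: both arguments specialize equation~(\ref{eq:LieAlgHam}) by setting $\rho^i_k = \delta^i_k$ and $C^k_{ij} = 0$, then read off the canonical Hamilton equations. You supply more justification than the paper does (explaining why the structure functions vanish and adding the consistency check on $\sigma_H$ and $X_H$), but the underlying argument is identical.
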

\begin{proof}
	Since $E = T^*Q$ and $ H: T^*Q \longrightarrow \mathbb{R}$, from equation \ref{eq:LieAlgHam}, where $\rho = id$ and $C^k_{ij} = 0$. We have then
	\begin{align*}
	\dot{q_i} = \frac{dq_i}{dt} &= \rho^k_i \frac{\partial H}{\partial p_k}\\
	&= \frac{\partial H}{\partial p_i}
	\end{align*}
	and
	\begin{align*}
	\dot{p_i} = \frac{dp_i}{dt} &= - \rho^k_i \frac{\partial H}{\partial q^k} - p_k C^k_{ij} \frac{\partial H}{\partial p_j}\\
	&= - \frac{\partial H}{\partial q_i}
	\end{align*}
\end{proof}
\noindent
Thus the anchor map generates the Hamiltonian vector field through the Hamiltonian equations. We see that the use of Hamiltonian Lie Algebroids allows the solution space to be highlighted. This will be said to define a measurement space.
\\
\\

\section{Quantum Systems on Lie Algebroids}
Since its inception, quantum mechanics was formulated as an algebraic system and little has changed by advances in the mathematics since then. The prevailing attitude has been the ``shut up and calculate'' approach\footnote{The expression ``Shut up and calculate'' is usually attributed to Feynman although there is some evidence to suggest that it is, in fact, due to Mermin \cite{article:feynmanquote-mermin}.}. In order to properly express quantum mechanics in the Lie algebroid environment we will describe it as a geometric system \cite{online:geomformofqm,incoll:somermksonhamsysandqm-chernoff-marsden,article:geomaspctsqmandqentangl-chruciski,article:geomformofqm-clemente-gallardo,book:geomqm-geroch,online:geomformofqm-heydari,article:geomofqm-kibble,thesis:geomofqm-schilling, article:geommethinqm-spera}.
\\
\\
Further, the quantum case is somewhat more involved as we have the question regarding its representation. The Scr\"{o}dinger representation places the emphasis on the state of the system. In the Heisenberg representation, the emphasis is on the physical magnitude of the measurement via the observables. As we will show, the two representations are equivalent.
\\
\\
The basic structure of a quantum system is a wave function. Since we cannot measure a wave function directly we measure a particular property of the system. We do this by applying an operator to a quantum state and get its value. By Born's Rule, this results in a probability. Consider the expectation value for an operator, $\hat{A}$
\begin{align}
\langle \hat{A} \rangle &= \langle \psi_t \vert \hat{A} \vert \psi_t \rangle \label{eq:opexpect} \\
	&=\langle \psi_0 \vert U^{\dagger}_t \hat{A} U_t \vert \psi_0 \rangle
\end{align}
We have two ways of looking at this, we have the view
\begin{equation}
\vert \psi_t \rangle = \hat{A} \vert \psi_0 \rangle
\end{equation}
in which case we get the Schr\"{o}dinger representation
\begin{equation}
\langle \hat{A} \rangle = ( \langle \psi_0 \vert U^{\dagger} ) \hat{A}  (U \vert \psi_0 \rangle)
\end{equation}
We can also fix the initial wave function and vary the operator to get the Heisenberg representation
\begin{equation}
\langle \hat{A} \rangle = \langle \psi_0 \vert (U^{\dagger} \hat{A} U) \vert \psi_0 \rangle
\end{equation}
We will discuss each of these representations below.
\subsection{The Schr\"{o}dinger Representation }
In this section we will consider pure states of a finite-dimensional Hilbert space in the Schr\"{o}dinger picture. The Schr\"{o}dinger representation is characterized by
\begin{enumerate}
	\item The state of a quantum system is a vector in Hilbert space.
	\item An observable is a linear, self-adjoint operator acting on a Hilbert vector.
	\item The state $| \psi \rangle$ changes with time through the introduction of a unitary operator $| \psi \rangle = U | \psi_0 \rangle$.
\end{enumerate}
Suppose we are given a state in a quantum system. By our assumptions, this is a vector in Hilbert space, $\vert \psi \rangle \in \mathcal{H}$. A Hilbert space is a complete vector space that comes equipped with an inner product, the Hermitian form, $\langle\cdot \vert \cdot \rangle : \mathcal{H} \otimes \mathcal{H} \rightarrow \mathbb{R}$ where, for $ X,Y \in \mathcal{H} \; , \; c \in \mathbb{C}$,
\begin{equation*}
\quad \langle X \vert Y \rangle = \overline{\langle Y \vert X \rangle}, \quad \langle cX \vert Y \rangle = \langle X \vert \bar{c}Y \rangle = c \langle X \vert Y \rangle.
\end{equation*}
A Hilbert manifold can be recast as a complexified real space by introducing a complex structure, $J$ such that $J^2 = - \mathds{1}$ 
or in matrix form
\begin{equation*}
J = \begin{bmatrix}
0 & - \mathds{1} \\
\mathds{1} & 0 \\
\end{bmatrix}
\end{equation*}
So that
\begin{equation*}
J^2 = \begin{bmatrix}
0 & - \mathds{1} \\
\mathds{1} & 0 \\
\end{bmatrix} \begin{bmatrix}
0 & - \mathds{1} \\
\mathds{1} & 0 \\
\end{bmatrix}
=
\begin{bmatrix}
- \mathds{1} & 0 \\
0 & - \mathds{1} \\
\end{bmatrix}
= - \mathds{1}.
\end{equation*}
Then the (Hermitean) inner product is
\begin{equation*}
\langle X \vert Y \rangle = g(X,Y) - J\omega(X,Y)
\end{equation*}
where 
\begin{equation}
	g(X,Y) = Re\langle X \vert Y \rangle
\end{equation}
is the \textbf{K\"{a}hler metric} with
\begin{align}\label{eq:kmetric}
g(X,Y) &= g(JX,JY)\\ \nonumber
g(X,Y) &= g(Y,X)
\end{align}
And
\begin{equation}
\omega(X,Y)= Im\langle X \vert Y \rangle
\end{equation} is the symplectic \textbf{K\"{a}hler form} with
\begin{equation*}
\omega(X,Y) = - \omega(Y,X).
\end{equation*}
$g$ and $\omega$ are related by
\begin{equation*}
\omega(X,Y) = g(X,JY) = - g(JX,Y)
\end{equation*}
Introducing local coordinates, $z_i = x_i + i y_i$ , g has the form
\begin{equation}
	g_{i \bar{j}} = \left(\frac{\partial}{\partial x_i}, \frac{\partial}{\partial \bar{y}_j}\right)
\end{equation}
\begin{equation}\label{eq:kahlermetric}\
g = \sum_{i=1}^n dx_i \otimes dx_i + dy_i \otimes dy_i 
\end{equation}
and $\omega$ has the form \cite{article:kahlermetric-kahler}
\begin{align}
	\omega &= i \sum_{i,j} \omega_{i\bar{j}} dz^i \wedge d\bar{z}^j \nonumber\\
		&= \sum_{i=1}^n dx_i \wedge dy_i \label{eq:kform}
\end{align}
Clearly, $g(X,Y) \in {\bigvee}^2$ is symmetric and $\omega(X,Y) \in \bigwedge^2$ is symplectic since $d\omega = 0$.
\\
\\
 J has the coordinate representation
\begin{equation}\label{eq:j}
J = \partial_{X_i} \otimes d Y_i - \partial_{Y_i} \otimes d X_i.
\end{equation}
Summarizing this, the Hilbert space can be expressed as a K\"{a}hler manifold \cite{book:mathtopicbetweencandqm-landsman}:
\begin{defn}
	Let $\mathcal{H}$ be a Hilbert space with a Hermitian product $\langle \cdot \vert \cdot \rangle$ such that, for $X,Y \in \mathcal{H}$ we have $$\langle X \vert Y \rangle = g(X,Y) + J \omega(X,Y)$$ where $J^2 = \mathds{1}$ and $$\omega(X,Y) = g(JX,Y)$$ is the \textit{K\"{a}hler form} and $\omega$ is closed. Then $(\mathcal{H}, \langle \cdot \vert \cdot \rangle, \omega)$ is a \textbf{K\"{a}hler manifold}. We will denote a K\"{a}hler manifold by $\mathcal{K}$.
\end{defn}
\begin{rmk}
	We note that, in local coordinates, the contravariant version of eqs.\ref{eq:kahlermetric},\ref{eq:kform},\ref{eq:j} are given by
	\begin{align}
		Reimann \; tensor \quad& G = \sum_{i=1}^{n}\left(\frac{\partial}{\partial x^i} \otimes \frac{\partial}{\partial x^i} + \frac{\partial}{\partial y^i} \otimes \frac{\partial}{\partial y^i} \right) \label{eq:riemtens}\\
		Symplectic \; form \quad& \Omega = \sum_{i=1}^{n} \left(\frac{\partial}{\partial x^i} \wedge \frac{\partial}{\partial y^i} \right) \label{eq:sympltens}\\
		Complex \; structure \quad& J =\sum_{i=1}^{n} \left(\frac{\partial}{\partial y_i} \otimes dx^i - \frac{\partial}{\partial x^i} \otimes dy_i  \right)\\
		\textnormal{then} \nonumber\\
		G + i\Omega = \sum_{i=1}^{n} \left( \frac{\partial}{\partial q_i} - i \frac{\partial}{\partial q_i} \right) &\otimes \left( \frac{\partial}{\partial q_i} + i \frac{\partial}{\partial q_i} \right) = 4 \sum_{i=1}^{n}\frac{\partial}{\partial z_i} \otimes \frac{\partial}{\partial \bar{z_i}} \label{eq:hilbsum}
	\end{align}
\end{rmk} 
Now, as a physical system, a quantum state is equivalent to a state with a complex constant $\vert \phi \rangle \sim c \vert \psi \rangle$, where $c \in \{\mathbb{C} - 0\}$. Thus, a quantum state is an equivalence class, or a ray, in the projective Hilbert space. The equivalence class will be denoted by $ [X] \in \mathcal{P}_{\mathcal{H}}$ where 
\begin{equation*}
\mathcal{P}_{\mathcal{H}} = \frac{\mathcal{H}}{\sim}
\end{equation*}
is the projective Hilbert space.
\begin{defn}
	Let $ \mathcal{S} = \mathcal{H}_{1} = \{\vert \psi \rangle \in \mathcal{H}, \langle \psi \vert \psi \rangle = 1\} \subset \mathcal{H}$ and $\sim$ defined as above. Then $ \mathcal{P}_{\mathcal{H}} = \mathcal{S}/\sim$ is referred to as the \textbf{quantum phase space} \textnormal{\cite{online:geomformofqm,online:geomformofqm-heydari}}. A physical state is a ray in $\mathcal{P_H}$.
\end{defn}
\noindent
In the event that the states are defined over a standard space we have, $c \vert \psi \rangle \longrightarrow e^{i \theta} \vert \psi \rangle$. Thus, a physical state is a ray in projective Hilbert space where $ \mathcal{S} \simeq S^{2n+1} = \{\psi \in \mathbb{C}^{n+1} \vert \langle \psi \vert \psi \rangle = 1\}$,
\begin{equation*}
	\mathcal{P_H} = \frac{\mathcal{S}}{\sim} \simeq \frac{S^{2n+1}}{U(n)} := \mathbb{C}P(n).
\end{equation*}
That is, a point in the Hilbert space is an equivalence class (a ray), $[\psi ]$, in the projective space $\mathcal{P}_{\mathcal{H}}$ and can be equated to a one-dimensional projection
\begin{equation*}
[\psi] \longleftrightarrow P(\psi) = \vert \psi \rangle \langle \psi \vert.
\end{equation*}
which is a density operator on a pure state. In a normalized space
\begin{equation*}
	[\psi] \longleftrightarrow P(\psi) = \frac{\vert \psi \rangle \langle \psi \vert}{\langle \psi \vert \psi \rangle}.
\end{equation*}
For the finite case this becomes $\mathbb{C}P(n)$. It has a metric given by the Fubini-Study metric \cite{article:geomofqm-kibble},
\begin{align*}
	for \; \psi = &\sum_i \psi_i e_i,\\
		g_{i,j} = &\frac{\langle \psi \vert \psi \rangle \delta_{ij} - \psi_{(i} \bar{\psi}_{j)}}{\langle \psi \vert \psi \rangle^2}
\end{align*}
or in local coordinates
\begin{align}\label{eq:qmmetric}
ds^2 &= g_{i\bar{j}} dz^i d\bar{z}^j \nonumber \\
&= \frac{(1 + \vert \textbf{z} \vert^2) \vert d \textbf{z} \vert^2 - (\bar{\textbf{z}}\cdot d\textbf{z})(\textbf{z}\cdot d\bar{\textbf{z}})}{(1 + \vert \textbf{z} \vert^2)^2} \nonumber \\
&= \frac{(1 + z_i \bar{z}^j)dz_j d\bar{z}^j - \bar{z}^j z_i dz_j d\bar{z}^i}{( 1 + z_i \bar{z}^i)^2} \nonumber\\
\end{align}
\noindent
Let $\widetilde{P} = \langle \psi \vert \phi \rangle$ represent the transition from one state to another. With respect to the Fubini-Study metric, the geodesic distance between the two states is $\gamma(x,y)$. In $\mathbb{C}P^n$ the transition is
\begin{align*}
P_y(x) &= \widetilde{P(x)}_{\pi^{-1}y}(\pi^{-1}(x)) \\
&= \langle \pi^{-1}(y) \vert \pi^{-1}(x) \rangle\\
&= cos^2 \left( \frac{\gamma(x,y)}{\sqrt{2 \hbar}} \right)
\end{align*}
It also has a symplectic structure
\begin{equation*}
	\omega(z,\bar{z}) = \{z,\bar{z}\} \quad z,\bar{z} \in \mathbb{C}P(n)
\end{equation*}
We also have a group, the Unitary group\cite{online:geomofqsys_denandentangl-grabowski-etal}, $U(n)$, that preserves the Hermitian product and is given by the set of complex linear operators, $A \in Gl(n)$ on $\mathcal{H}$ such that $A A^{\dagger} = 1$ where $A^{\dagger}$ is the Hermitian conjugate that satisfies $$\langle AX | Y \rangle = \langle X | A^{\dagger} Y\rangle $$
Its Lie algebra is $\mathfrak{u}(n)$.
\\
\\
The above considerations lead to a bundle representation.
\begin{defn}
	Let $M$ be a manifold. A \textbf{Hilbert bundle}, $(\pi, \mathcal{H}, Q) \; or \; \mathcal{H} \overset{\pi}{\longrightarrow} Q$, is a Banach vector bundle over $Q$ whose fibers consist of a Hilbert space, $\mathcal{H}$ over a manifold, $Q$, with projection $\pi$. The structure group is the unitary group, $U(n)$.
\end{defn}
\begin{rmk}
	In the case where the Hilbert space is replaced with a K\"{a}hler manifold, $\mathcal{K}$, we have the \textbf{K\"{a}hler bundle}, $(\pi, \mathcal{K}, Q) \; or \; \mathcal{K} \overset{\pi}{\longrightarrow} Q$.
\end{rmk}
\noindent
The fiber is given by
\begin{equation*}
\pi^{-1}([\psi]) = \frac{e^{i \psi} \vert \psi \rangle}{\langle \psi \vert \psi \rangle} \in \mathcal{H}_{1}.
\end{equation*}
for a $| \psi \rangle \in \mathcal{H}$ and it has the unitary group, $U(n)$. Thus it forms a principal bundle \cite{online:quantprincfiberbundtopaspect-budzynski,online:geomformofqm-heydari}
\begin{equation*}
U(n) \equiv G \hooklongrightarrow \mathcal{H}_{1} \longrightarrow Q
\end{equation*}
where $G$ is the (Lie) group, $\mathcal{H}_{1}$ is the unit Hilbert space, that is, $\mathcal{H}$ such that $\langle \psi \vert \psi \rangle = 1$ and $Q$ is the base manifold.
The bundle has the adapted coordinates given by \cite{online:geomofqsys_denandentangl-grabowski-etal}
\begin{equation*}
g = dx_i \otimes dx_i + dy_i \otimes dy_i \; , \quad \omega = dx_i \wedge dy_i
\end{equation*}
\noindent
\begin{rmk}
	The tangent bundle takes the form $T\mathcal{K} \simeq \mathcal{K} \times \mathcal{K}$ where $\mathcal{K} = \mathcal{H}_{\mathbb{R}}$, the realification of $\mathcal{H}$ as a Kahler manifold \cite{online:geomofqsys_denandentangl-grabowski-etal}.
\end{rmk}
\begin{expl} \textbf{A qubit}
	\\
	Consider the case of a qubit, a quantum system with two possible states. In this case, $\mathcal{P}_{\mathcal{H}} \simeq \mathcal{H}^{-} / \sim$ where $\mathcal{H}^{-} = \mathcal{H} - \{0\}$ and $x \sim x^{'} \iff x = cx^{'}$ for
	\begin{equation*}
		c \in \mathcal{H}^{-} = \mathcal{H} - \{0\} = S^3 / U(1) \simeq S^2 =: \mathbb{C}P(1)
	\end{equation*}
The states are lines passing through the origin and forms a 2-sphere (called the Bloch sphere). The pure states, $\{ \vert 0 \rangle, \vert 1 \rangle \}$, lie antipodally on the surface of the sphere. The mixed states lie in the interior of the sphere.
	\\
	\\
	Thus, we have a principal fiber bundle with the following structure
	\begin{equation*}
	U(n) \cong S^1 \hookrightarrow S^3 \overset{\eta}{\longrightarrow} \mathbb{C}P^1 \cong S^2 
	\end{equation*}
	\noindent

	$\eta$ is the Hopf fibration \textnormal{\cite{book:geomofminkowspctim-nabor}} and $\mathcal{P_H} \cong \mathbb{C}P^1$.
\end{expl}
\noindent
We now introduce an observable. An observable can be thought of as a constraint as it restricts the number of states available. An observable is defined by a Hamiltonian operator $\hat{H}$.
\begin{defn}
	An \textbf{observable}, $\hat{A}$, is a bounded, self-adjoint, linear operator in Hilbert space.
\end{defn}
\noindent
\begin{rmk}
	The space of observables forms a $C^*$-Algebra. We will use this fact when we consider the Heisenberg representation.\\
\end{rmk}
Suppose, now, we have such an operator, $\hat{A}$, representing an observable in a finite-dimensional Hilbert space such that $\hat{A}^{\dagger} = \hat{A}$. Since the quantum phase space is formed from a principal bundle, the sections form a Lie algebra, $\Gamma (\hat{A}) \subset \mathcal{A}_{\hat{A}}$.
\\
\\
As in the classical case, we want to determine the evolution. We do this by generating a vector space of trajectories, that is, we want the equivalent of the Hamiltonian vector field. We construct a Hamiltonian function that encapsulates the observable in question. The Hamiltonian is represented by a function, $A : \mathcal{P_H} \rightarrow \mathbb{R}$, given by the expectation value
\begin{equation*}
A = \langle \psi \vert \hat{A} \psi \rangle \rightarrow \frac{\langle \psi \vert \hat{A} \psi \rangle}{\langle \psi \vert \psi \rangle}  
\end{equation*}
The function generates a section, $\sigma(A) \in \mathfrak{u}(n)$. As we have noted, the sections form a Lie algebra, $\mathcal{A}_{A}$.
\\
\\
The evolution is obtained from the anchor and is given by
\begin{equation*}
	\rho(\sigma) = \iota_{X_H}\omega = \omega(\sigma, \cdot) \rightarrow X_H = \omega^{-1}(\sigma,\cdot) dH
\end{equation*}
where $X_H$ is determined by the Sch\"{o}dinger equation
\begin{equation*}
	\rho(\sigma_A) = \dot{\psi} = i \hbar \{\psi,H\}
\end{equation*}
and generates vectors in the Hilbert space associated with the observable, $\hat{A}$. A vector has the form,
\begin{equation}
	X_{H_{A}}^i = \omega(X,H_A)^{ij} \frac{\partial}{\partial_j} \quad \textnormal{where}  \quad \omega^{ij} = \omega^{-1}.
\end{equation}
This is a vector subspace, called the \textbf{Schr\"{o}dinger vector field}, of the associated tangent space and is isomorphic to a Hilbert space. This is equivalent to the Hamiltonian vector field of classical mechanics since this takes the sections to a vector subspace of the tangent bundle.
\\
\\
The vector space solution generates a flow, the unitary group of transformations, such that $ \psi^{'}_{t+1} = U_{t} \psi_t$. A value is obtained by calculating the eigenvalues and eigenvectors $A \vert \psi_n \rangle = a_n \vert \psi_n \rangle$, where we have assumed that the $a_n$ are not degenerate.
\\
\\
The extension of the geometrical approach to a quantum system is similar using Lie algebroids. The bundle \cite{book:gmsgeomfromofclassandqm} is given by ($\mathcal{K}$, M, $\pi$) where $\mathcal{K}$ is the K\"{a}hler manifold associated with the projective Hilbert space, M is the underlying base, and $\pi$ is the projection.
\\
\\
\noindent
This yields a eigenvalue equation $A \vert \psi \rangle = a_n \vert \psi \rangle$. A measurement returns an eigenvalue, $a_n$ in a state $\vert \phi \rangle_n$ with a probability given by $p_n = \vert \langle \phi_n \vert \psi \rangle \vert$.
\\
\\
The above constructions can be summarized as
\begin{prop}
	Let $(\mathcal{P_H},\pi,Q,U(n))$ be a quantum phase space that satisfies the following properties:
	\begin{enumerate}
		\item  Let $(\mathcal{P_K},\rho,Q,U(n))$ be a principal fiber bundle where $\mathcal{P_K}$ is a projective K\"{a}hler space with the Hermitian inner product $$\langle \psi \vert \phi \rangle = g(\psi, \phi) + i \omega(\psi, \phi)$$ and $U(n)$ is the Unitary group. This is a quantum phase space.
		\item There is a Hamiltonian, $H$, that generates a Lie algebra of sections, $\Gamma(\mathcal{P_K})$ and forming the observable space, $\mathcal{A} \simeq \Gamma(\mathcal{P_K})$.
		\item  Let $H$ be a Hamiltonian such the sections form a Lie algebra, $\mathcal{A}$, and such that $\rho(H)$ is a bundle morphism from the sections to a vector subspace of the associated tangent bundle, $$\rho(H): \mathcal{A} \rightarrow X_H$$ that generates a flow given by $e^{iHt}$. The anchor map is given by $\rho: \Gamma(H_A) \rightarrow TQ$, where $H_A$ is the Hamiltonian associated with the observable $A$, and maps the sections of $\mathcal{P_K}$ to the associated tangent bundle, $TQ$, that describes the evolution of the system.
		\item The anchor is given by $\rho( \sigma_A) = \dot{\psi} = i \hbar \{ \psi, H_A\}$, which is the Schr\"{o}dinger equation. This results in a Schr\"{o}dinger vector field.
		\item  The Schr\"{o}dinger vector space gives a quantum measurement obtained by the eigenvalue $a_n$ in a state $\vert \phi \rangle_n$ with a probability given by $p_n = \vert \langle \phi_n \vert \psi \rangle \vert$.
		
	\end{enumerate}
Then, the above structure is a Lie algebroid.
\end{prop}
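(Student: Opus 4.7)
The plan is to verify, one by one, the axioms of a Lie algebroid as given in Definitions 2.1 and 2.2, using the five structural properties already listed in the statement. The underlying vector bundle structure comes packaged for free: property 1 asserts that $(\mathcal{P_K}, \rho, Q, U(n))$ is a principal $U(n)$-bundle over $Q$, and since the principal bundle construction associates to it the adjoint bundle whose sections are $\mathfrak{u}(n)$-valued, we get a genuine vector bundle $E \to Q$ whose fiber at each point is (identifiable with) the Lie algebra of bounded self-adjoint operators via $A \mapsto iA$. Property 2 then names the Lie algebra structure on $\Gamma(E) \simeq \mathcal{A}$: the bracket is inherited from the $C^*$-algebra commutator, $\llbracket A, B \rrbracket := i[A,B]$, which is well-defined on self-adjoint operators, bilinear, antisymmetric, and satisfies the Jacobi identity because the associative commutator does. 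This disposes of the bracket axiom and the Jacobi identity of Definition 2.2 in one stroke.

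Next I would handle the anchor. Property 3 defines $\rho(\sigma_A) = X_{H_A}$ as a bundle morphism $\mathcal{A} \to TQ$, and property 4 identifies this map concretely via the Schr\"odinger equation $\dot{\psi} = i\hbar\{\psi, H_A\}$, which is exactly the Hamiltonian vector field associated to the expectation-value function $A = \langle \psi | \hat{A} \psi\rangle / \langle \psi|\psi\rangle$ with respect to the K\"ahler symplectic form $\omega$ of $\mathcal{P_K}$. The two conditions to check are the anchor-bracket compatibility $\rho\llbracket v, w\rrbracket = [\rho v, \rho w]$ and the Leibniz rule. For the Leibniz rule, since $X_{fA} = f X_A + (\rho(v)f) A$ holds for Hamiltonian vector fields on any symplectic manifold and the K\"ahler form is symplectic by construction (eq.~\ref{eq:kform} with $d\omega=0$), it reduces to a direct computation in the adapted coordinates.

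The core obstacle, and the step I would spend most effort on, is verifying $\rho\llbracket v,w\rrbracket = [\rho v, \rho w]$, i.e.\ that the Schr\"odinger assignment $\hat{A} \mapsto X_{H_A}$ is a Lie algebra homomorphism from $(\mathcal{A}, i[\cdot,\cdot])$ to $(\mathfrak{X}(\mathcal{P_K}), [\cdot,\cdot])$. My approach would be to factor the map through the Poisson algebra of expectation-value functions: first show that the expectation map $\hat{A} \mapsto A$ intertwines the operator commutator with the Poisson bracket coming from $\omega$, i.e.\ $\{A, B\}_\omega = \widehat{i[\hat{A}, \hat{B}]}$, which is the standard Dirac correspondence and follows from the explicit form of $\omega$ in eq.~\ref{eq:kform} combined with the action of $\hat{A}$ on $\psi$; then use the general symplectic identity $X_{\{f,g\}} = [X_f, X_g]$ to conclude. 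The main subtlety is that $A$ lives on the projective space $\mathcal{P_H}$ rather than on $\mathcal{H}$, so one must check that the computation descends along the quotient by $U(1)$ (or more generally $U(n)$), which is precisely what the principal bundle structure in property 1 guarantees.

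Finally, property 5 is not actually part of the Lie algebroid axioms but an interpretational remark about measurement, so it requires no verification; I would mention this explicitly. Having checked all axioms of Definition 2.2, the proof concludes by packaging the data $(E, \llbracket \cdot, \cdot\rrbracket, \rho)$ as the Lie algebroid in question, with $E$ the bundle of observables over $Q$, bracket $i[\cdot,\cdot]$, and anchor given by the Schr\"odinger vector field assignment.
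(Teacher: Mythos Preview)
Your proposal is substantially more detailed than the paper's own argument, which is a single line: ``The conditions satisfy the definition of a Lie algebroid as described above.'' In other words, the paper treats the five enumerated properties as having been constructed precisely so that the Lie algebroid axioms are manifestly fulfilled by the preceding discussion in Section~3.1, and the proposition is meant as a summarizing assertion rather than something requiring independent verification. Your plan, by contrast, actually walks through the axioms of Definitions~2.1 and~2.2 and sketches how each is obtained from the listed data; in particular your factoring of the anchor-bracket compatibility through the Dirac correspondence $\{A,B\}_\omega = \widehat{i[\hat A,\hat B]}$ followed by the symplectic identity $X_{\{f,g\}}=[X_f,X_g]$ is the substantive argument that the paper leaves implicit.

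One small caution: your sentence on the Leibniz rule is garbled. The identity you write, ``$X_{fA} = f X_A + (\rho(v)f) A$,'' is not a standard statement about Hamiltonian vector fields and mixes a vector field with an operator on the right-hand side. What you actually need is the Lie algebroid Leibniz rule $\llbracket v, fw\rrbracket = (\rho(v)f)\,w + f\llbracket v,w\rrbracket$ for the bracket on sections, and to verify it you would use that the commutator of operators is a derivation in each slot together with the $C^\infty(Q)$-module structure on sections of the associated bundle. This is easily fixed, and once corrected your argument is both sound and considerably more informative than the paper's.
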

\begin{proof}
	The conditions satisfy the definition of a Lie algebroid as described above.
\end{proof}
\subsection{The Heisenberg Representation}
The mathematical formulation of the Heisenberg representation is given by the dual space of the Lie algebra of the operators via the coadjoint actions of the unitary group \cite{online:geomofqsys_denandentangl-grabowski-etal}, $\mathfrak{u}^*(\mathcal{H})$, of $\mathfrak{u}(\mathcal{H})$. This forms a principal bundle over the unitary group. The initial state remains fixed and the emphasis is on the value of an observable acting on the initial state. The geometry is expressed by considering the Poisson structure of the dual space.
\\
\\
As with the Schr\"{o}dinger representation, we consider pure states in a finite-dimensional model.
\\
\\
We start with a $C^*$-algebra of linear, self-adjoint operators \cite{article:algapproachtoqft-haag-kastler,article:postforqm-segal}. The states are given by positive, linear functionals, $\lambda$, on the $C^*$-algebra, $\mathcal{A}$, and normalized such that
\begin{equation*}
	Tr \lambda = 1
\end{equation*}
The observable space, $\mathcal{O} \subset \mathcal{H}$ is a collection of linear, self-adjoint, that is Hermitian, operators \cite{online:geomofqm-carinena-etal,article:geomformofqm-clemente-gallardo,book:found20centphys-emch,online:geomofqsys_denandentangl-grabowski-etal,book:mathtopicbetweencandqm-landsman}. These observables satisfy the unitary condition, $ A^{\dagger}A = A^{*}A = 1, \; A \in \mathfrak{u}(\mathcal{H}) \subset \mathcal{O}$, associated to $U(n)$, the unitary group, with $n = dim (\mathcal{H})$. This defines a real vector space isomorphic to the Lie algebra $\mathfrak{u}(\mathcal{H})$. Since the geometry is to be found in the dual space we have, for the anti-Hermitian $A \in \mathfrak{u}(\mathcal{H})$ and $\xi \in \mathfrak{u}^*(\mathcal{H})$, the pairing, from the Killing-Cartan form
\begin{equation}\label{eq:operprod}
	\langle \xi , A \rangle = \xi(A) = \frac{i}{2} Tr(A \xi).
\end{equation}  
Furthermore, the elements of $\mathfrak{u}(\mathcal{H})$ have a one-to-one correspondence with the dual Lie algebra, $\mathfrak{u}^*(\mathcal{H})$, given by, for $A \in \mathfrak{u}(\mathcal{H})$ and $\hat{A} \in \mathfrak{u}^*(\mathcal{H})$,  
\begin{align*}
	\mathfrak{u}(\mathcal{H}) &\rightarrow \mathfrak{u}^*(\mathcal{H})\\
	A &\mapsto iA = \hat{A} = \xi
\end{align*}
where we have set $\hat{A} = \xi$. This relates the adjoint and co-adjoint group actions of the group $U(\mathcal{H})$ that is given by $Ad_U(A) = UAU^{\dagger}$. Thus, $\mathfrak{u}(\mathcal{H})$ with Lie bracket $[A,B]_{-} = \frac{1}{i} (AB - BA)$ is a Lie algebra, and has a product given by
\begin{equation}\label{eq:opermet}
	\langle A , B \rangle = \frac{1}{2} Tr(AB).
\end{equation}
(see \cite{online:geomofqsys_denandentangl-grabowski-etal} for more details).
\\
\\
The geometrical structure is given by the orbits of $\mathfrak{u}^*(\mathcal{H})$, the dual space of Hermitian operators on $\mathcal{H}$ \cite{online:geomofqsys_denandentangl-grabowski-etal} with the Poisson product. This results in a principal fiber bundle.
\begin{equation*}
	U(n) \hooklongrightarrow \mathcal{S} \overset{\pi}{\longrightarrow} \mathcal{P_H} \simeq \frac{\mathcal{S}}{\sim} 
\end{equation*}
where the fiber is given by
\begin{equation*}
	\phi = \pi^{-1}([\psi]) = \frac{\langle e^{i \theta} \vert \psi \rangle}{\langle \psi \vert \psi \rangle}
\end{equation*}
\begin{expl}\textbf{A qubit}
	\\
	Consider the qubit again. In this case the principal bundle is given by
	\begin{equation*}
	U(n) \simeq S^1 \hooklongrightarrow \mathcal{S} \simeq S^2 \overset{\pi}{\longrightarrow} \mathcal{P_H} \simeq \frac{\mathcal{S}}{\sim}  \simeq \mathbb{C}P^1
\end{equation*}
here the fiber is given by
\begin{equation*}
\phi = \pi^{-1}([\psi]) = \frac{\langle e^{i \theta} \vert \psi \rangle}{\langle \psi \vert \psi \rangle} \in S^{2n + 1}
\end{equation*}
where we have set $S^{2n + 1} = \{\vert \psi \rangle \in \mathcal{H} \; \vert \; \langle \psi \vert \psi \rangle = 1 \} $.
\end{expl}
Now, consider operators as endomorphisms and let $A,B$ be two endomorphisms in $\mathcal{H}$. We have three products in our K\"{a}hler manifold. The first is the the point product
\begin{equation}
	A \cdot B = AB
\end{equation}
The second is the Jordan form
\begin{equation}
	A \circ B = [A,B]_+ = \frac{1}{2} (AB + BA) \label{eq:jordanprod}
\end{equation}
This generates a \textit{Jordan Algebra} defined by
\begin{defn}
	A \textbf{Jordan Algebra} is a non-associative algebra with a Jordan product, $(\mathcal{O}, \circ)$, such that the product of two elements, $A,B \in \mathcal{O}$, commute, $$A \circ B = B \circ A$$ and $$(AB)A^2 = A(BA^2)$$
\end{defn}
The third product is the \textit{Lie product} given by
\begin{defn}
	Let $A,B \in \mathcal{O}$ and define the \textbf{Lie product} by
	\begin{align*}
		(A,B) &\longmapsto [A,B]_-\\
		[A,B]_- &:= \frac{1}{2}(AB - BA)
	\end{align*}
	This product is closed, anti-symmetric and non-degenerate and defines a symplectic structure.
\end{defn}
In addition, this gives the Poisson form
\begin{equation}
	\{A,B\} = \frac{i}{\hbar}[A,B]_- = \frac{i}{\hbar}(AB - BA)  \label{eq:lieprod}
\end{equation}
So, if we have the functions, $\hat{A}, \hat{B}$ we get\cite{article:geomformofqm-clemente-gallardo}
\begin{equation}
	G(d \hat{A}, d \hat{B}) = \{\hat{A}, \hat{B}\}_+  \label{eq:gprod2}
\end{equation}
and
\begin{equation}
	\Omega(d \hat{A}, d \hat{B}) = \{\hat{A}, \hat{B}\} \label{eq:omegaprod}
\end{equation}
Combining the two products we have
\begin{equation}
	\widehat{AB} = G(d \hat{A}, d \hat{B}) + i \Omega(d \hat{A}, d \hat{B}) = \frac{1}{2}\{\hat{A}, \hat{B}\}_+ + \frac{i}{2}\{\hat{A}, \hat{B}\} \label{eq:hermitprod}
\end{equation}
Finally, we have
\begin{defn}
	A \textbf{Jordan - Lie algebra} is a Lie algebra $(\mathcal{O}, [\cdot, \cdot]_-)$, that possesses a Jordan product satisfying, $\forall A,B,C \in \mathcal{O} \quad and \quad \hbar \in \mathbb{R}$ $$[A, B \circ C] _- = [A, B]_- \circ C + B \circ [A,C]_-$$ and $$(A \circ B ) \circ C - A \circ ( B \circ C) = \frac{1}{4} \hbar [[A,C]_- , B]_- $$
\end{defn}
Using the results obtained above we can form the tensor given by, for $\omega \in \mathfrak{u}^*(\mathcal{H})$,
\begin{equation*}
	R(\omega)(d\hat{A}, d \hat{B}) = \omega([A,B]_+) = \frac{1}{2} Tr \omega(AB + BA) \quad from \; Eq.\ref{eq:jordanprod}
\end{equation*}
Additionally we have a map, for $\hat{A}, \hat{B} \in \mathfrak{u}^*(n) \simeq \mathcal{O}^*$ where the functions $\hat{A}, \hat{B}$ are identified with $A,B \in \mathfrak{u}(n)$ by $\{\hat{A}, \hat{B}\} = \widehat{[A, B]}$. Then $\mathcal{O}^*$ has a Poisson structure and the Poisson bivector, $\Lambda$, has the form
\begin{align}\label{eq:lambda}
	\Lambda(d\hat{A}, d\hat{B})(\omega) &= \{\hat{A} , \hat{B}\}(\omega) \nonumber \\
	&= \omega([A,B]) \nonumber\\
	&= \frac{i}{2} Tr \; \omega(AB - BA)  \quad from \; Eq.\ref{eq:lieprod}
\end{align}
Combining these two results, where, for $\omega \in \mathfrak{u}^*(\mathcal{H})$, we get the tensor
\begin{equation}\label{eq:poissonform}
	(R + i \Lambda)(\omega)(d\hat{A}, d\hat{B}) = 2(\widehat{AB})(\omega) = \omega(AB) = Tr(\omega AB)
\end{equation}
Now, let $\psi$ be a state. Suppose the action, $T$, of $U(\mathcal{H})$ on $\mathcal{H}$ is Hamiltonian. This gives us a momentum (moment) map, $m(\psi)$
\begin{equation*}
	 \left< m(\psi) \Big \vert \frac{1}{i} T \right> = \frac{1}{2}\langle \psi \vert T \psi \rangle_{\mathcal{H}}
\end{equation*}
where, by Eq. \ref{eq:operprod}
\begin{equation}
	\Big \langle m(\psi) \Big \vert \frac{1}{i} T \Big \rangle = \frac{i}{2} Tr \left( m(\psi)\frac{T}{i} \right)=\frac{1}{2} Tr(m(\psi) T) =  \frac{1}{i} Tr \left( m(\psi) \right)
\end{equation}
and this gives us, finally, our momentum map in Dirac notation
\begin{equation*}
	m(\psi) = \frac{\vert \psi \rangle \langle \psi \vert}{\langle \psi \vert \psi \rangle}.
\end{equation*}
The momentum map identifies the tensor given by  Eqs. \ref{eq:gprod2}, \ref{eq:omegaprod}, \ref{eq:poissonform} and the Hermitian product \cite{online:geomofqsys_denandentangl-grabowski-etal} by
\begin{equation*}
	m(G + i \Omega) = R + i \Lambda
\end{equation*}
We can then combine the above results and get
\begin{lemma} \label{lemma:hamvect}
	The momentum map applied to the Jordan-Lie algebra generates a Hamiltonian vector field on $\mathfrak{u}^*(\mathcal{H})$ given by the derviations obtained from Eq. \ref{eq:lambda}. That is,
	we have the derivation
	\begin{equation*}
		\hat{X}_H = \frac{1}{\hbar}\{\hat{H},\cdot\} = \Lambda(d\hat{H},\cdot).
	\end{equation*}
Furthermore, the flow is given by 
	\begin{equation*}
			\frac{d\hat{A}}{dt} =\frac{1}{\hbar} \{\hat{H},\hat{A}\} = \Lambda(d\hat{H},\hat{A}).
	\end{equation*}
\end{lemma}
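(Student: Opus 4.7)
The plan is to assemble the lemma in three stages: first establish that $\mathfrak{u}^{*}(\mathcal{H})$ is a Poisson manifold whose bracket is precisely the bivector $\Lambda$ of Eq. \ref{eq:lambda}, then show that the momentum map $m$ intertwines the K\"{a}hler structure on the state space with the Lie-Poisson structure on $\mathfrak{u}^{*}(\mathcal{H})$, and finally read off the Hamiltonian vector field and its integral curves in the standard Poisson-geometric way.

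For the first stage I would invoke the Kirillov-Kostant-Souriau construction: the dual of any Lie algebra carries a canonical linear Poisson structure determined by the bracket of the algebra, and in the present case this is exactly $\Lambda(d\hat{A},d\hat{B})(\omega)=\omega([A,B])=\tfrac{i}{2}\mathrm{Tr}\,\omega(AB-BA)$ of Eq. \ref{eq:lambda}, so by Eq. \ref{eq:lieprod} one has $\Lambda=\tfrac{1}{\hbar}\{\cdot,\cdot\}$. The Hamiltonian formalism reviewed in Section \ref{sect:liealgham} then assigns to each $\hat{H}\in C^{\infty}(\mathfrak{u}^{*}(\mathcal{H}))$ a Hamiltonian section $\sigma_{\hat{H}}$ whose associated Hamiltonian vector field is the derivation $\hat{X}_{H}=\Lambda(d\hat{H},\cdot)$; this is the first identity of the lemma.

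For the second stage I would verify that the momentum map $m(\psi)=|\psi\rangle\langle\psi|/\langle\psi|\psi\rangle$ is compatible with these Poisson structures. The relation $m(G+i\Omega)=R+i\Lambda$ stated just above the lemma, combined with Eqs. \ref{eq:gprod2}, \ref{eq:omegaprod}, \ref{eq:poissonform}, pulls the Jordan-Poisson tensor on $\mathfrak{u}^{*}(\mathcal{H})$ back to the Hermitian tensor on the K\"{a}hler manifold. Taking imaginary parts makes $m$ a Poisson morphism, so the K\"{a}hler Hamiltonian vector field generated by $\hat{H}\circ m$ on the state side pushes forward to $\hat{X}_{H}$ on $\mathfrak{u}^{*}(\mathcal{H})$. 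This compatibility is where I expect the main technical obstacle to sit, since one must carefully track the factors of $\tfrac{1}{2}$, $\tfrac{i}{\hbar}$ and sign conventions among Eq. \ref{eq:operprod}, Eq. \ref{eq:opermet}, Eq. \ref{eq:lieprod}, and the definition of $m$, so that the coefficient in $\hat{X}_{H}=\tfrac{1}{\hbar}\{\hat{H},\cdot\}$ comes out correctly rather than off by a power of $i$ or $2$.

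For the final stage the equation of motion is obtained by evaluating $\hat{X}_{H}$ on an observable function $\hat{A}$ along its integral curves. The Hamiltonian vector field generates a one-parameter group of diffeomorphisms on $\mathfrak{u}^{*}(\mathcal{H})$, which by the Heisenberg-picture remarks preceding the lemma is the coadjoint flow $\mathrm{Ad}^{*}_{U_{t}}$ with $U_{t}=e^{iHt/\hbar}$. Applying the definition of the Lie derivative along this flow and using the identity just established gives $\tfrac{d\hat{A}}{dt}=\hat{X}_{H}(\hat{A})=\Lambda(d\hat{H},d\hat{A})=\tfrac{1}{\hbar}\{\hat{H},\hat{A}\}$, which is the Heisenberg equation and completes the proof.
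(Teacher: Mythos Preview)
Your proposal is substantially more detailed than what the paper does: the paper's ``proof'' of this lemma consists of a single line referring the reader to \cite{article:geomformofqm-clemente-gallardo} for details, with no argument given in the text itself. So there is nothing to compare at the level of mathematical strategy; you have supplied an actual argument where the paper supplies only a citation.

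That said, your three-stage plan is sound and is essentially the argument one would expect to find in the cited reference. The first stage is the Kirillov--Kostant--Souriau Poisson structure on $\mathfrak{u}^{*}(\mathcal{H})$, which indeed yields the bivector $\Lambda$ of Eq.~\ref{eq:lambda} and hence the derivation $\hat{X}_{H}=\Lambda(d\hat{H},\cdot)$. The second stage, checking that $m$ is a Poisson morphism via the imaginary part of $m(G+i\Omega)=R+i\Lambda$, is exactly the content the paper records just before the lemma, so you are leaning on material the paper has already asserted. The third stage is the routine identification of the flow with the Heisenberg equation. Your caveat about tracking the constants $\tfrac{1}{2}$, $i$, and $\hbar$ across Eqs.~\ref{eq:operprod}, \ref{eq:opermet}, \ref{eq:lieprod} is well placed: the paper's own conventions are not entirely internally consistent on these factors, so some care (or a choice of normalisation) is genuinely needed there, but this is bookkeeping rather than a gap in the logic.
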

\begin{proof}
	See \cite{article:geomformofqm-clemente-gallardo} for details.
\end{proof}
	The derivation properties says that the algebra (hence the sections) form a Lie algebra \cite{book:mathtopicbetweencandqm-landsman}. Furthermore, the derivations of the Lie product guaranties that the Leibniz rule is satisfied.
\\
\\
Thus, we have geometrized the Lie algebra, $\mathfrak{u}(\mathcal{H})$ by associating it with a Poisson tensor on the dual vector space, $\mathfrak{u}^*(\mathcal{H})$, forming a principal vector bundle of $U(n)$ on a $\mathbb{C}^*$-algebra. Further details can be found in \cite{online:geomofqm-carinena-etal,article:geomformofqm-clemente-gallardo,online:geomofqsys_denandentangl-grabowski-etal}.
\begin{rmk}
	Every linear operator, $A \in GL(\mathcal{H})$ on $\mathcal{H}$ gives a quadratic function via $f_A(\psi) = H_A = \frac{1}{2} \langle \psi \vert A \psi \rangle$ where $\vert \psi \rangle  \in \mathcal{H}$. If $A$ is Hermitian, $A = A^{\dagger}$, then $f_A(\psi)$ is real. Let $H_A, H_B \in \mathcal{Q}_{\mathcal{K}}$ be two quadratic functions where $\mathcal{Q}_{\mathcal{K}}$ is the set of quadratic functions in the K\"{a}hler manifold $\mathcal{K}$. Then the Riemann tensor (Eq.\ref{eq:gprod2}) and the symplectic tensor (Eq.\ref{eq:omegaprod}) have the form
	\begin{align}
		G(dH_A, dH_B) &= \{H_A, H_B\}_+ = (H_A H_B + H_B H_A) = H_{A\circ B} \label{eq:gprod}\\
		&and \nonumber \\
		\Omega(dH_A, dH_B) &= \{H_A, H_B\} = (H_A H_B - H_B H_A)= H_{[A,B]} \label{eq:quadomega}
	\end{align}
These are combined to form the Hermitian product
	\begin{equation} \label{eq:momprod}
		\langle dH_A \vert dH_B \rangle = G(dH_A, dH_B) + i\Omega(dH_A, dH_B)
	\end{equation}
\begin{rmk}
	Recall that $dH = \omega(H,\cdot)$ and so defines a Hamiltonian vector field, $X_H$. Since we can associate an operator with a function, $\hat{A} \overset{\simeq}{\longrightarrow} A$, then from Eq. \ref{eq:quadomega}, $\Omega$ says that the commutator $[A,B]$ is equivalent to the Poisson bracket of the associated function \cite{article:geommethinqm-spera}.
\end{rmk}
Now, we can define the momentum map, $\mu(\mathcal{H})$, where $\psi \in \mathcal{H}$ and for $\hat{A} \in \mathfrak{u}^*(\mathcal{H})$ associated with $iA \in \mathfrak{u}(\mathcal{H})$, we have (see \cite{article:geomformofqm-clemente-gallardo}, pg. 68)
\begin{equation*}
	\mu^*(\hat{A}) = \frac{\vert \psi \rangle \langle \psi\vert}{\langle \psi \vert \psi \rangle}
\end{equation*}
and, from Eq.\ref{eq:momprod} $\mu_*(G + i \Omega) = R + i \Lambda$ in the contravariant case.
\end{rmk}
Using the results of Lemma \ref{lemma:hamvect}, the evolution is provided by the anchor map which is, in our case, is the Heisenberg equation $$i \hbar \frac{\partial A}{\partial t} = AH - HA$$ where $H$ is the Hamiltonian. It takes the observable sections to the vector space. The anchor generates the Heisenberg vector field and gives the unitary evolution operator $U^{H_A}_{t}$.
\\
\\
The eigenvectors are given by the critical points of the expectation value of the operator, $$A \rightarrow \langle A\rangle(\psi) = \frac{\langle \psi \vert A \psi \rangle}{\langle \psi \vert \psi \rangle}$$ that is $$d\langle A \rangle (\psi_A) = 0 \iff \psi_A \; is \; an \; eigenvector \; of \; A.$$ The result of the measurement is then obtained by calculating the eigenvalues and eigenvectors from the resulting Heisenberg vector field. Thus we are lead to the following proposition:
\begin{prop}
	Let the quantum phase space be given by the $C^*-Algebra$ of self-adjoint operators acting on a Hilbert space, $\mathcal{H}$ and such that
	\begin{enumerate}
		\item The quantum phase space forms a principal vector bundle 
	\begin{equation*}
		U(n) \hooklongrightarrow \mathcal{S} \overset{\pi}{\longrightarrow} \mathcal{P_H} \simeq \frac{\mathcal{S}}{\sim} 
	\end{equation*}
		whose fibers are
		\begin{equation*}
			\phi = \pi^{-1}([\psi]) = \frac{\langle e^{i \theta} \vert \psi \rangle}{\langle \psi \vert \psi \rangle} \in S^{2n + 1}
		\end{equation*} 
		\item The sections are given by the Hamiltonian, $H$, and gives rise to a Lie algebra, $\mathfrak{u}(n)$, that, in turn, generates a Poisson structure on $\mathfrak{u}^*$.
		\item The evolution of the system is given by the anchor, $\rho$, and equal to the Heisenberg equation $$i \hbar \frac{\partial A}{\partial t} = AH - HA$$.
	\end{enumerate}
Then the construction is a Lie algebroid.
\end{prop}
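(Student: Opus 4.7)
The plan is to verify directly that the three ingredients listed in the proposition, the principal bundle, the Lie algebra of sections, and the Heisenberg anchor, fit the definitions of Section~2. First I would recall that a Lie algebroid requires (i) a vector bundle $\pi\colon E\to M$, (ii) a Lie bracket on $\Gamma[E]$, and (iii) an anchor $\rho\colon\Gamma[E]\to\mathfrak{X}(M)$ that is a Lie algebra homomorphism and satisfies the Leibniz rule. The bundle structure is supplied by hypothesis~1, since the principal fiber bundle $U(n)\hooklongrightarrow\mathcal{S}\overset{\pi}{\longrightarrow}\mathcal{P_H}$ is associated to a vector bundle whose typical fiber is the space of self-adjoint operators (the $C^*$-algebra), and this is exactly the underlying object of Example~2.3 (the Atiyah Lie algebroid pattern) adapted to the unitary group.

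Next I would establish the Lie algebra structure on sections. By hypothesis~2, sections correspond to Hamiltonians $H$, which via the moment map identification $m(\psi)=\vert\psi\rangle\langle\psi\vert/\langle\psi\vert\psi\rangle$ sit inside $\mathfrak{u}^*(\mathcal{H})$. The Poisson bivector $\Lambda$ computed in Eq.~\ref{eq:lambda} supplies a bracket $\{\hat{A},\hat{B}\}=\widehat{[A,B]}$, and because $[\cdot,\cdot]_-$ on $\mathfrak{u}(\mathcal{H})$ is antisymmetric and Jacobi (standard facts for the commutator), the induced bracket on sections satisfies both the skew-symmetry and Jacobi identity required by Definition~2.2. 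This step is essentially bookkeeping, but it must be articulated carefully so that the identification $\mathfrak{u}(n)\simeq\Gamma(\mathcal{P_K})$ is compatible with the bracket on each fiber.

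The central step is then verifying the anchor conditions for $\rho(\sigma_H)=\hat{X}_H=\Lambda(d\hat H,\cdot)$, which by Lemma~\ref{lemma:hamvect} reproduces the Heisenberg equation $i\hbar\,\partial_t A=[A,H]$. For the homomorphism property $\rho\llbracket v,w\rrbracket=[\rho v,\rho w]$, I would invoke the Jordan--Lie derivation identity
\begin{equation*}
[A,B\circ C]_- = [A,B]_-\circ C + B\circ[A,C]_-
\end{equation*}
together with the Jacobi identity for $[\cdot,\cdot]_-$, which together ensure that the Hamiltonian vector fields $\hat X_H$ close under the commutator of vector fields in exactly the same way that the $\hat H$ close under the Poisson bracket, as already asserted in Lemma~\ref{lemma:hamvect}. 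The Leibniz rule $\llbracket v, fw\rrbracket = \rho(v)f\,w + f\llbracket v,w\rrbracket$ then follows from the derivation property of $\Lambda(d\hat H,\cdot)$ on the algebra of functions, which is a standard consequence of $\Lambda$ being a bivector of Poisson type.

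The main obstacle I anticipate is the compatibility between the algebraic structure on $\mathfrak{u}(\mathcal{H})$ (living in the fibers) and the differential-geometric structure on $\mathcal{P_H}$ (on which the anchor must land). Concretely, one must show that the vector field produced by $\Lambda(d\hat H,\cdot)$ is genuinely tangent to $\mathcal{P_H}$, i.e. that the Heisenberg flow $U^{H_A}_t$ preserves the equivalence relation defining the projective space and descends to a one-parameter group of diffeomorphisms of the base. This follows because $U^{H_A}_t$ is unitary and hence commutes with the $U(1)$-action that defines the rays, but this passage from the algebraic to the geometric level is where the Lie algebroid axioms acquire physical content, and I would spend the bulk of the proof making this descent precise before concluding, as in the Schr\"odinger case, that all four axioms of Definition~2.2 are satisfied.
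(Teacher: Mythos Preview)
Your proposal is correct and follows the same overall strategy as the paper: both argue that the structures assembled in the preceding discussion satisfy the Lie algebroid axioms of Definitions~2.1--2.2. The paper's proof, however, is far terser than yours: it simply states that ``the proof follows from the above discussion,'' restates the Heisenberg evolution $\frac{d}{dt}\hat{U}=-\frac{i}{\hbar}[H,\hat{U}]$ in $\mathfrak{u}^*(n)$, recalls the eigenvector criterion $d\langle A\rangle(\psi_A)=0$, and concludes. It does not separately verify the homomorphism or Leibniz conditions, nor does it address the descent of the unitary flow to $\mathcal{P_H}$.

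Your version is therefore more thorough than the original. The explicit invocation of the Atiyah pattern (Example~2.3), the use of the Jordan--Lie derivation identity to check that $\rho$ is a bracket homomorphism, and the argument that the $U(1)$-equivariance of $U_t^{H_A}$ forces the Heisenberg flow to descend to the base are all genuine additions that the paper leaves implicit. What this buys you is an actual axiom-by-axiom verification; what the paper's approach buys is brevity, since it treats the proposition as a summary of the section rather than an independent result. If you intend to match the paper, you can compress your argument to a single reference back to Lemma~\ref{lemma:hamvect} and the Poisson structure on $\mathfrak{u}^*(\mathcal{H})$; if you prefer rigor, your expanded treatment is an improvement.
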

\begin{proof}
	The proof follows from the above discussion. 
	\\
	\\
	The evolution is given by the Heisenberg equation in $\mathfrak{u}(n)$ and has the form $\frac{d}{dt}U = - \frac{i}{\hbar} [H,U]$. Or, in the geometric (operator) form in $\mathfrak{u}^*(n)$, $\frac{d}{dt}\hat{U} = - \frac{i}{\hbar} [H,\hat{U}]$.
	\\
	\\
	The eigenvectors are given by the critical points of the expectation value of the operator, $$A \rightarrow \langle A\rangle(\psi) = \frac{\langle \psi \vert A \psi \rangle}{\langle \psi \vert \psi \rangle}$$ that is $$d\langle A \rangle (\psi_A) = 0 \iff \psi_A \; is \; an \; eigenvector \; of \; A.$$
	\\
	\\
	Thus, we have satisfied the requirements of a Lie algebroid.
\end{proof}
\begin{rmk}
	The Heisenberg uncertainty principle can be obtained from the above work (see \cite{book:geophaseinclassandqm-chrusinski-jamil,online:geomformofqm-heydari} for details). Let $A$ be an observable. Recall from Eq. \ref{eq:gprod} that, for two functions $f_A$ and $f_B$
	\begin{equation*}
		\{f_A, f_B\}_+ = \frac{\hbar}{2} G(X_A, X_B) = \left< \frac{1}{2}[A,B]_+ \right>
	\end{equation*} 
we have that the uncertainty is given by
	\begin{equation*}
		(\Delta A)^2 = \langle A^2 \rangle - \langle A \rangle^2 = \{\tilde{A}, \tilde{A} \} - \tilde{A}^2
	\end{equation*}
where $\tilde{A} = \langle \psi \vert A \vert \psi \rangle$ is the expectation value of the operator $A$ and $\langle \psi \vert \psi \rangle = 1$. Then, if $B$ is another operator and the Schwartz inequality
	\begin{equation*}
		(\Delta A)^2 (\Delta B)^2 \ge \left< \frac{1}{2i}[A, B]\right>^2 + \left< \frac{1}{2} [\delta A, \delta B]_+ \right> ^2
	\end{equation*}
where $\delta A = A - \hat{A}$ and $\delta B = B - \hat{B}$ so that applying Eq.\ref{eq:gprod}, Eq.\ref{eq:quadomega} we end up with
	\begin{equation*}
		(\Delta A)^2 (\Delta B)^2 \ge \Omega(X_A, X_B)^2 + (G(X_A, X_B) - \hat{A} \hat{B})^2
	\end{equation*}
Indeed, it can be shown \cite{article:geommethinqm-spera} that the uncertainty is the result of the curvature.
\\
\\
Thus, the uncertainty is a natural outcome of the geometric process. 
\end{rmk}
\begin{rmk}
	It can also be shown that the geometric representation of quantum mechanics can be used to study entanglement \cite{article:geomaspctsqmandqentangl-chruciski}.
\end{rmk}
\subsection{The Geometrical Equivalence of the Schr\"{o}dinger and Heisenberg Representations}
Finally, we note the equivalence of the Schr\"{o}dinger and Heisenberg representations.
\begin{prop}
	The Heisenberg observable $A^H$ is equivalent to the Schr\"{o}dinger observable, $A^S$, and thus the two representations produce the same results.
\end{prop}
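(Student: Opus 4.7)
The plan is to establish the equivalence on two levels: first algebraically at the level of expectation values, which are the quantities that produce physical predictions, and second geometrically by showing that the two Lie algebroid structures are intertwined by the momentum map already constructed in the previous subsection.

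First, I would begin from Eq.~\ref{eq:opexpect} and observe that the identity
\begin{equation*}
\langle \hat{A} \rangle = \langle \psi_0 \vert U_t^{\dagger} \hat{A} U_t \vert \psi_0 \rangle
\end{equation*}
admits two readings depending on where the parentheses are placed: grouping $U_t$ with the ket produces $\langle \psi_t \vert \hat{A} \vert \psi_t \rangle$ with $\vert \psi_t \rangle = U_t \vert \psi_0 \rangle$, which is the Schr\"{o}dinger form, while grouping $U_t^{\dagger} \hat{A} U_t$ as a single object $\hat{A}_t^H$ yields $\langle \psi_0 \vert \hat{A}_t^H \vert \psi_0 \rangle$, the Heisenberg form. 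This purely algebraic identity shows that any eigenvalue-probability data extracted from either picture agree.

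Next, I would verify that the dynamical generators agree on the Lie algebroid structures. Differentiating $\hat{A}_t^H = U_t^{\dagger} \hat{A} U_t$ with $U_t = e^{-iHt/\hbar}$ produces the Heisenberg equation $i\hbar\, d\hat{A}_t^H/dt = [\hat{A}_t^H, H]$, which is precisely the anchor map of the Heisenberg Lie algebroid as given in Lemma~\ref{lemma:hamvect}. Differentiating $\vert \psi_t \rangle = U_t \vert \psi_0 \rangle$ produces the Schr\"{o}dinger equation $i\hbar\, d\vert \psi_t \rangle/dt = H \vert \psi_t \rangle$, which is the anchor map of the Schr\"{o}dinger Lie algebroid. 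Both flows are generated by the same unitary one-parameter group $U_t$ acting on $\mathcal{H}$, so they descend to the same evolution on the gauge-invariant ray $[\psi_t]$.

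Finally, I would tie the geometric pictures together through the momentum map
\begin{equation*}
m(\psi) = \frac{\vert \psi \rangle \langle \psi \vert}{\langle \psi \vert \psi \rangle}
\end{equation*}
constructed above. Since $m$ identifies $\mathcal{P}_{\mathcal{H}}$ with the coadjoint orbit of rank-one projectors in $\mathfrak{u}^*(\mathcal{H})$, and satisfies $m^*(R + i\Lambda) = G + i\Omega$ (Eqs.~\ref{eq:gprod2}--\ref{eq:poissonform}), the Schr\"{o}dinger vector field on the K\"{a}hler manifold pushes forward to the Hamiltonian vector field on $\mathfrak{u}^*(\mathcal{H})$ generated by $\hat{H}$ via $\Lambda$. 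Equivariance of $m$ under $U(\mathcal{H})$ then ensures that the Schr\"{o}dinger anchor $\rho_S$ and the Heisenberg anchor $\rho_H$ intertwine via $Tm \circ \rho_S = \rho_H \circ m_*$, and both Lie brackets descend from the common commutator $[A,B]_-$ on $\mathfrak{u}(\mathcal{H})$.

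The main obstacle I expect is in the last step: making the equivariance of the momentum map precise at the level of Lie algebroid morphisms, rather than merely matching vector fields pointwise. Specifically, one must verify that $m$ respects the principal bundle structures on both sides (the Hopf-like fibration on the Schr\"{o}dinger side and the coadjoint orbit fibration on the Heisenberg side) and that the unitary gauge quotient makes the eigenvalue spectra, not merely the expectation values, agree. Once this intertwining is established, the equality of measurement outcomes, flows, and bracket structures together give the claimed equivalence.
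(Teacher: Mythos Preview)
Your first step---regrouping the parentheses in $\langle \psi_0 \vert U_t^{\dagger} \hat{A} U_t \vert \psi_0 \rangle$ to read off $A^H = U^{-1} A^S U$---is exactly the paper's entire proof: the paper simply computes the matrix elements $\widetilde{A}^H_{ij} = \langle U\psi^S_i \vert A^S U\psi^S_j \rangle$ and inserts $U^{-1}U = 1$ to arrive at $U^{-1}_i A^S_{ij} U_j$, then stops.

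Your second and third steps (differentiating to recover both evolution equations from the common $U_t$, and using the momentum map $m$ to exhibit an explicit intertwiner of the two Lie algebroid anchors via $m_*(G+i\Omega)=R+i\Lambda$ and $U(\mathcal{H})$-equivariance) go substantially beyond what the paper attempts. The paper's argument is purely the algebraic unitary-equivalence of matrix elements and makes no contact with the algebroid structures it has built; your plan actually uses the geometry of the preceding subsections to show that the two anchors and brackets are related by a bundle morphism, which is arguably what the proposition \emph{ought} to assert in the context of the paper. The obstacle you flag---verifying that $m$ is a genuine Lie algebroid morphism rather than just a pointwise intertwiner of Hamiltonian vector fields---is real but tractable: it reduces to the standard fact that the momentum map for a Hamiltonian $G$-action is a Poisson map onto the coadjoint orbit, together with $U(\mathcal{H})$-equivariance of $\psi \mapsto \vert\psi\rangle\langle\psi\vert / \langle\psi\vert\psi\rangle$. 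The paper does not address this, so if you retain only step one you match the paper; if you carry out all three you produce a stronger result.
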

\begin{proof}
	Suppose we have a Schr\"{o}dinger representation state $\psi^S_i$. This can be mapped into a Heisenberg state by $\widetilde{\psi}^{\gamma}_i = U_{\gamma} \psi^{\gamma}_i$ where $U$ is a unitary transformation (usually having the form $U = e^{iHt/\hbar}$) \cite{online:fiberbundleqmiliev}. Then for $\{\widetilde{\psi}_i\}$
	\begin{align*}
	\widetilde{A}^H_{ij} &= \langle \widetilde{\psi}^H | A^S_{ij} \widetilde{\psi}^H_j \rangle\\
	&= \langle U \psi^S_i | A^S U_j \psi^S_j \rangle \\
	&= \langle U^{-1} U_i \psi^S_i | U^{-1}j A^S U_j \psi^S_j \rangle \\
	&= \langle \psi^S_i | U^{-1}_i A^S_{ij} U_j \psi^S_j \rangle\\
	&= U^{-1}_i A^S_{ij} U_j\\
	\end{align*}
Thus, we see that the Heisenberg representation is equivalent to the Schr\"{o}dinger representation.
\end{proof}
\section{Discussion}
The above work demonstrates that Lie algebroids provides a useful context for quantum mechanics. It supports the two main representations and it exposes the underlying structure.
\\
\\
We have not discussed mixed states. For example, in the case of the qubit, the pure states resided on the surface of the Bloch sphere. The mixed state were contained in the interior of the sphere and are given by density matrices \cite{article:geomformofqm-clemente-gallardo}. We have also not considered the environment. That is, we have assumed that the quantum systems are isolated. The environment is usually introduced by way of constraints and causes the states to dissipate or introduces decoherence \cite{online:decoherencemeasprobandinterpqm,online:decohereinselandqmorigofclassic}. Realistically, physical systems are rarely isolated, but we have selected our approach for convenience. 
\\
\\
Also, we have not considered connections, or gauges, in quantum Lie algebroids. This is important in discussing symmetries and Yang-Mills systems. There has been some work on describing connections on Lie algebroids \cite{article:linconninliealgebroids}, but they have not been applied to quantum systems. We have also not explored the Lie algebroid structure of quantum field theory. Quantum fields are nicely described by fiber bundles. This would require extending our considerations to infinite dimensions as well as discussing gauges.
\\
\\
One of the hopes of a geometric theory of quantum mechanics is that it would provide a means of connecting quantum mechanics and general relativity within Lie algebroids. Casting quantum mechanics as a geometric system aids in the process. This remains for future study.
\addcontentsline{toc}{section}{References}
\bibliography{bibtexallqm}

\begin{thebibliography}{63}
\providecommand{\natexlab}[1]{#1}
\providecommand{\url}[1]{\texttt{#1}}
\expandafter\ifx\csname urlstyle\endcsname\relax
  \providecommand{\doi}[1]{doi: #1}\else
  \providecommand{\doi}{doi: \begingroup \urlstyle{rm}\Url}\fi

\bibitem[Abraham and Marsden(1978)]{book:abrahammarsdenfoundmech}
R.~Abraham and J.~Marsden.
\newblock \emph{Foundations of Mechanics}.
\newblock Amer. Math. Soc., Providence, 1978.

\bibitem[Ashtekar and Schilling(1997)]{online:geomformofqm}
A.~Ashtekar and T.~A. Schilling.
\newblock Geometrical formulation of quantum mechanics, 23 Jun 1997.
\newblock arXiv:gr-gcc/9706069v1.

\bibitem[Blohmann and
  Weinstein(2018)]{online:hamiltliealgebroids-blomann-weinstein}
C.~Blohmann and A.~Weinstein.
\newblock Hamiltonian {L}ie {A}lgebroids, {27 Nov} 2018.
\newblock arXiv:1811.11109v1[math.SG].

\bibitem[Budzy\'{n}ski and
  Kondracki(1994)]{online:quantprincfiberbundtopaspect-budzynski}
R.~J. Budzy\'{n}ski and W.~Kondracki.
\newblock Quantum {P}rincipal {F}iber {B}undles: {T}opological {A}spects, {6
  Jan} 1994.
\newblock arXiv:hep-th/9401019v1.

\bibitem[{Cannas da Silva} and
  Weinstein(1999)]{book:cannasweinsteingeomodfornoncommalg}
A.~{Cannas da Silva} and A.~Weinstein.
\newblock \emph{Geometric Models for Noncommutative Algebras}.
\newblock American Mathematical Society, Providence, 1999.

\bibitem[Carinena et~al.(2007)Carinena, Clemente-Gallardo, and
  Marmo]{online:geomofqm-carinena-etal}
J.~Carinena, J.~Clemente-Gallardo, and G.~Marmo.
\newblock Geometrization of {Q}uantum {M}echanics, {23 March} 2007.
\newblock arXiv:math-ph/0701053v2.

\bibitem[Chernoff and
  Marsden(1976)]{incoll:somermksonhamsysandqm-chernoff-marsden}
P.~R. Chernoff and J.~E. Marsden.
\newblock Some {R}emarks on {H}amiltonian {S}ystems and {Q}uantum {M}echanics.
\newblock In Harper and Hooker, editors, \emph{Foundations of {P}robability
  {T}heory, {S}tatistical {I}nference and {S}tatistical {T}heories of
  {S}cience}, volume III, pages {35--53}. D. Reidel Pub. Co., 1976.

\bibitem[Chru\'{s}ci\'{n}ski(2006)]{article:geomaspctsqmandqentangl-chruciski}
D.~Chru\'{s}ci\'{n}ski.
\newblock Geometric {A}spects of {Q}uantum {M}echanics and {Q}uantum
  {E}ntanglement.
\newblock \emph{J. Phys.: Conf. Series}, 30\penalty0 (002):\penalty0 {9--16},
  2006.
\newblock \doi{10.1088/1742-6596/30/1/002}.

\bibitem[Chru\'{s}ci\'{n}ski and
  Jamio\/{l}kowski(2004)]{book:geophaseinclassandqm-chrusinski-jamil}
D.~Chru\'{s}ci\'{n}ski and A.~Jamio\/{l}kowski.
\newblock \emph{Geometric {P}hases in {C}lassical and {Q}uantum {M}echanics}.
\newblock Number~36 in Progress in Mathematical Physics. Birkh\"{a}user, 2004.
\newblock ISBN 0-8176-4282-X.

\bibitem[Clemente-Gallardo(2012)]{article:geomformofqm-clemente-gallardo}
J.~Clemente-Gallardo.
\newblock The {G}eometrical {F}ormulation of {Q}uantum {M}echanics.
\newblock \emph{Rev. Real Academia de Ciencias. Zaragoza}, 67:\penalty0
  {51--103}, 2012.
\newblock ISSN 0370-3207.

\bibitem[Cort\'{e}s and Mart\'{i}nez(2004)]{online:mechcontrlsysonliealg}
J.~Cort\'{e}s and E.~Mart\'{i}nez.
\newblock Mechanical control systems on {L}ie algebroids, 26 Feb 2004.
\newblock arXiv:math/0402437v1[math.OC].

\bibitem[Cortes et~al.(2006)Cortes, de~Leon, Marrero, de~Diego, and
  Mart\'{i}nez]{article:corteslangrangemechonliealgebroids}
J.~Cortes, M.~de~Leon, J.~Marrero, D.~M. de~Diego, and E.~Mart\'{i}nez.
\newblock A survey of {L}agrangian mechanics and control on {L}ie algebroids
  and groupoids.
\newblock \emph{Int. J. Geom. Meth. Mod. Phys.}, 3:\penalty0 509{--}558, 2006.

\bibitem[Crampin and Saunders(2016)]{book:cartangeomandtheirsymm-crampin}
M.~Crampin and D.~Saunders.
\newblock \emph{Cartan {G}eometries and their {S}ymmetries}, volume~4 of
  \emph{{A}tlantis {S}tudies in {V}ariational {G}eometry}.
\newblock Atlantis Press, 2016.
\newblock ISBN 978-94-6239-191-8.
\newblock \doi{10.2991/978-94-6239-192-5}.

\bibitem[de~Leon et~al.(2004)de~Leon, Marrero, and
  Mart\'{i}nez]{online:lagrangesubmnfdsdynamonliealgebroids}
M.~de~Leon, J.~C. Marrero, and E.~Mart\'{i}nez.
\newblock Lagrangian submanifolds and dynamics on {L}ie algebroids, 30 Jul
  2004.
\newblock arXiv:math/0407528v1[math.DG].

\bibitem[Degeratu and Ivan(2005)]{article:linconninliealgebroids}
M.~Degeratu and M.~Ivan.
\newblock Linear connections on {L}ie algebroids.
\newblock In \emph{The Fifth Conference of Balkan Soc. of Geometers, Aug. 29 -
  Sept. 2, Mangalia, Romania}, Balkan Soc. of Geometers, 2005, pages {44--53},
  2005.

\bibitem[Emch(1984)]{book:found20centphys-emch}
G.~G. Emch.
\newblock \emph{Foundations of 20th century {P}hysics}.
\newblock North Holland, 1984.

\bibitem[Ezuck(2021)]{inprep:measspaceinliealgeb}
R.~J. Ezuck.
\newblock Classical and {Q}uantum {M}easurement {S}paces in {L}ie {A}lgebroids,
  2021.
\newblock In preparation.

\bibitem[Frankel(2012)]{book:frankelgeoofphysintro}
T.~Frankel.
\newblock \emph{The Geometery of Physics: An Introduction}.
\newblock Cambridge Univ. Press, Cambridge, 3rd edition edition, 2012.

\bibitem[Geroch(2013)]{book:geomqm-geroch}
R.~Geroch.
\newblock \emph{Geometrical Quantum Mechanics: 1974 Lecture}, volume~3 of
  \emph{Lecture Note Series}.
\newblock Minkowski Institute Press, Montreal, 2013.
\newblock ISBN 987-1-927763-04-9.

\bibitem[Giachetti et~al.(2011)Giachetti, Mangiarotti, and
  Sardanashvily]{book:gmsgeomfromofclassandqm}
G.~Giachetti, L.~Mangiarotti, and G.~Sardanashvily.
\newblock \emph{Geometric Formulation of Classical and Quantum Mechanics}.
\newblock World Scientific, Singapore, 2011.

\bibitem[Grabowska and Grabowski(2011)]{online:grabowskadiracalgebroids}
K.~Grabowska and J.~Grabowski.
\newblock Dirac algebroids in {L}agrangian and {H}amiltonian mechanics, 13 Jan
  2011.
\newblock arXiv:1101.2497v1[math-ph].

\bibitem[Grabowska et~al.(2006)Grabowska, Grabowski, and
  Urba\'{n}ski]{online:geommechanonalgebroids}
K.~Grabowska, J.~Grabowski, and P.~Urba\'{n}ski.
\newblock Geometrical {M}echanics on algebroids, {3 May} 2006.
\newblock arXiv:math-ph/0509063v2.

\bibitem[Grabowski et~al.(2005)Grabowski, Ku\'{s}, and
  Marmo]{online:geomofqsys_denandentangl-grabowski-etal}
J.~Grabowski, M.~Ku\'{s}, and G.~Marmo.
\newblock Geometry of quantum systems: density states and entanglement, {17
  Nov} 2005.
\newblock arXiv:math-ph/0507045v3.

\bibitem[Haag and Kastler(1964)]{article:algapproachtoqft-haag-kastler}
R.~Haag and D.~Kastler.
\newblock An algebraic approach to {Q}uantum {F}ield {T}heory.
\newblock \emph{Journal of Math. Phys.}, 5\penalty0 (7):\penalty0 {848}, 1964.

\bibitem[Heslot(1985)]{article:qmasaclassthry-heslot}
A.~Heslot.
\newblock Quantum mechanics as a classical theory.
\newblock \emph{Phys. Rev. D}, 31:\penalty0 {1341--1348}, 1985.

\bibitem[Heydari(2016)]{online:geomformofqm-heydari}
H.~Heydari.
\newblock Geometric formulation of quantum mechanics, {12 May} 2016.
\newblock arXiv:1503.00238v2[quant-ph].

\bibitem[Higgins and
  Mackenzie(1990)]{article:higginsmackenziealgconstinlieaglebroids}
P.~Higgins and K.~Mackenzie.
\newblock Algebraic {C}onstructions in the {C}ategory of {L}ie {A}lgebroids.
\newblock \emph{J. Algebra}, 129:\penalty0 {194--230}, 1990.

\bibitem[Iliev(2000)]{online:fiberbundleqmiliev}
B.~Z. Iliev.
\newblock Fiber bundle formulation of nonrelativistic quantum mechanics (full
  version), 9 Apr 2000.
\newblock arXiv:quant-ph/0004041v1.

\bibitem[K\"{a}hler(1933)]{article:kahlermetric-kahler}
E.~K\"{a}hler.
\newblock \"{U}ber eine bemerkenswete {H}ermitesche {M}etrik.
\newblock \emph{Abh. Math. Sem. Hamburg Univ.}, 9:\penalty0 {173--186}, 1933.

\bibitem[Kibble(1979)]{article:geomofqm-kibble}
T.~W.~B. Kibble.
\newblock Geometrization of quantum mechanics.
\newblock \emph{Commum. Math. Phys.}, 65:\penalty0 189{--}201, 1979.

\bibitem[Landsman(1998)]{book:mathtopicbetweencandqm-landsman}
N.~P. Landsman.
\newblock \emph{Mathematical Topics Between Classical and Quantum Mechanics}.
\newblock Springer Mathematical Monographs. Springer, New York, 1998.

\bibitem[Landsman and
  B.Ramazan(2000)]{online:quantpoissonalgassocliealgbroids-landsman}
N.~P. Landsman and B.Ramazan.
\newblock Quantization of {P}oisson algebras associated to lie algebroids, {21
  Nov} 2000.
\newblock arXiv:math-ph/0001005v2.

\bibitem[Leok and Sosa(2012)]{article:leokDiracStructAndHamJThryonLieAlg}
M.~Leok and D.~Sosa.
\newblock Dirac structures and {H}amiltonian{-}{J}acobi theory for {L}agrangian
  mechanics on {L}ie algebroids.
\newblock \emph{Journal of Geometric Mechanics}, 4\penalty0 (4):\penalty0
  421{--}442, December 2012.
\newblock \doi{10.3934/jgm.2012.4.421}.

\bibitem[Libermann(1996)]{article:liberman_1996}
P.~Libermann.
\newblock Lie algebroids and mechanics.
\newblock \emph{Archivum Mathematicum (Brno)}, Tomus 32:\penalty0 147{--}162,
  1996.

\bibitem[Mackenzie(1987)]{book:mackenzieliegrpoidsliealgbroidsindiffgeom}
K.~C. Mackenzie.
\newblock \emph{Lie Groupoids and Lie Algebroids in Differential Geometry}.
\newblock Cambridge Univ. Press, Cambridge, 1987.

\bibitem[Mackenzie(2005)]{book:mackenziegenthryliegrpoidsliealgbroids}
K.~C. Mackenzie.
\newblock \emph{General Theory of Lie Groupoids and Lie Algebroids}.
\newblock Cambridge Univ. Press, Cambridge, 2005.

\bibitem[Marle(2008)]{online:calconliealgliegrpandpoismnflds}
C.-M. Marle.
\newblock Calculus on {L}ie algebroids, {L}ie groups and {P}oisson manifolds,
  {29 Sep} 2008.
\newblock arXiv:0806.0919v3[math.DG].

\bibitem[Marrero(2009)]{online:hamiltdynoonliealgpreservvol}
J.~C. Marrero.
\newblock Hamiltonian {D}ynamics on {L}ie {A}lgebroids, {U}nimodularity and
  {P}reservation of {V}olumes, {1 May} 2009.
\newblock arXiv:0905.0123v1[math-ph].

\bibitem[Mart\'{i}nez(1999{\natexlab{a}})]{article:geoformgeomechonliealhebroids}
E.~Mart\'{i}nez.
\newblock Geometric formulation of mechanics on {L}ie algebroids,
  1999{\natexlab{a}}.
\newblock URL \url{http://citeseerx.ist.psu.edu/ viewdoc/download?
  \doi=10.1.1.532.3667\&rep=rep1\&type=pdf}.

\bibitem[Mart\'{i}nez(1999{\natexlab{b}})]{article:martinex1999}
E.~Mart\'{i}nez.
\newblock Lagrangian mechanics on {L}ie algebroids.
\newblock \emph{Archiv}, 1999{\natexlab{b}}.

\bibitem[Mart\'{i}nez(2001{\natexlab{a}})]{article:geoformogmechonliealgebroids}
E.~Mart\'{i}nez.
\newblock Geometric formulation of mechanics on {L}ie algebroids, in
  {P}roceedings of the {VIII} fall workshop on geometry and physics.
\newblock \emph{Publicaciones de la RSME}, 2:\penalty0 209--222, 01
  2001{\natexlab{a}}.

\bibitem[Mart\'{i}nez(2001{\natexlab{b}})]{article:langrangmechonliealgbroids}
E.~Mart\'{i}nez.
\newblock Lagrangian mechanics on {L}ie algebroids.
\newblock \emph{Acta Applicandae Mathematicae}, 67:\penalty0 {295--320},
  2001{\natexlab{b}}.

\bibitem[Mart\'{i}nez(2006)]{confproc:martinezliealgebroidinclassmech}
E.~Mart\'{i}nez.
\newblock Lie algebroids in classical mechanics and optimal control.
\newblock In SIGMA, editor, \emph{Proceedings of the Workshop on Geometric
  Aspects of Integrable Systems (July 17-19, 2006, Univ. Coimbra, Portugal)},
  2006.
\newblock URL \url{http://www.emis.de/journals/SIGMA/Coimbra2006.html}.

\bibitem[McDuff and Salamon(1995)]{mcduff_salamon_1995}
D.~McDuff and D.~Salamon.
\newblock \emph{Introduction to symplectic topology}.
\newblock Clarendon Press, Oxford, 1995.

\bibitem[Mermin(2004)]{article:feynmanquote-mermin}
N.~D. Mermin.
\newblock Could {F}eynman {H}ave {S}aid {T}his?
\newblock \emph{Physics Today}, 57\penalty0 (5):\penalty0 10, 2004.
\newblock \doi{10.1063/1.1768652}.

\bibitem[Nabor(2012)]{book:geomofminkowspctim-nabor}
G.~L. Nabor.
\newblock \emph{The {G}eometry of {M}inkowski {S}pacetime: {A}n {I}ntroduction
  to the {M}athematics of the {S}pecial {T}heory of {R}elativity}, volume~92 of
  \emph{Applied Mathematical Sciences}.
\newblock Springer, New York, 2012.
\newblock \doi{10.1007/978-1-4419-7838-7-0}.

\bibitem[Pelletier(2016)]{online:pelletiergeomstructonprolong}
F.~Pelletier.
\newblock Geometrical structures on the prolongation of a {pre-Lie} algebroid
  on fibered manifolds and application to {Partial} {Finsler} geometry on
  foliated anchor bundle, {8} Jun 2016.
\newblock arXiv:1412.6742v2[math.DG].

\bibitem[Peyghan et~al.(2014)Peyghan, Arcus, and
  Nourmohammadifar]{online:vertcompliftslielagebroids-peyghan}
E.~Peyghan, C.~Arcus, and L.~Nourmohammadifar.
\newblock Vertical and complete lifts of sections of a (dual) vector bundle and
  {L}egendre duality, {5 Oct} 2014.
\newblock arXiv:1410.8804v1[math.DG].

\bibitem[Popescu(2007)]{article:liealggeomandhamformal-popescu}
L.~Popescu.
\newblock Aspects of {L}ie {A}lgebroid {G}eometry and {H}amiltonian
  {F}ormalism.
\newblock \emph{ANALELE \c{S}TIIN\c{T}IFICE ALE UNIVERSIT\u{A}\c{T}II ``AL.I.
  CUZA'' Din IA\c{S}I (S. N.) MATEMATIC\u{A}}, Tomul LIII\penalty0
  (Supliment):\penalty0 {297--308}, 2007.

\bibitem[Popescu(2013)]{article:dualstruct-prolong-lalbroid-popescu}
L.~Popescu.
\newblock Dual {S}tructures on the {P}rolongations of a {L}ie algebroid.
\newblock \emph{Analele \c{S}tiintifice ale Univeritath ``AL.I CUZA'' din
  IA\c{S}I(S.N.) Matematic\u{a}}, Tomul LIX\penalty0 (f.2), 2013.
\newblock \doi{10,2478/v10157-012-0037-4}.

\bibitem[Pradines(1966)]{article:pradines1}
J.~Pradines.
\newblock Theorie de {L}ie pour les groupoids differentiables.
\newblock \emph{C.R. Acad. Sc. Paris}, 263:\penalty0 907{--}910, 1966.

\bibitem[Pradines(1967)]{article:pradines2}
J.~Pradines.
\newblock Theorie de {L}ie pour les groupoids differentiables.
\newblock \emph{C.R. Acad. Sc. Paris}, 264:\penalty0 245{--}248, 1967.

\bibitem[Sardanashvily(2013)]{online:sardanashvilyintHamsys}
G.~Sardanashvily.
\newblock Lectures on integrable {H}amiltonian systems, 21 Mar 2013.
\newblock arXiv:1303.5363v1[math-ph].

\bibitem[Saunders(2004)]{article:prolongofliegroupoidsalgebroids}
D.~J. Saunders.
\newblock Prolongations of {L}ie {G}roupoids and {L}ie {A}lgerbroids.
\newblock \emph{Houston Journal of Mathematics}, 30\penalty0 (3):\penalty0
  {637--655}, 2004.

\bibitem[Schilling(1996)]{thesis:geomofqm-schilling}
T.~A. Schilling.
\newblock Geometry of {Q}uantum {M}echanics, May 1996.
\newblock Ph. D thesis, Penn. State Univ.

\bibitem[Schlosshauer(2005)]{online:decoherencemeasprobandinterpqm}
M.~Schlosshauer.
\newblock Decoherence, the measurement problem, and interpretations of quantum
  mechanics, {28 Jun} 2005.
\newblock arXiv:quant-ph/0312059v4.

\bibitem[Segal(1947)]{article:postforqm-segal}
I.~E. Segal.
\newblock Postulates for {G}eneral {Q}uantum {M}echanics.
\newblock \emph{Annals of Mathematics}, 48\penalty0 (4):\penalty0 {930--948},
  1947.

\bibitem[Spera(2011)]{article:geommethinqm-spera}
M.~Spera.
\newblock Geometric {M}ethods in {Q}uantum {M}echanics.
\newblock \emph{Journal of Geometry and Symmetry in Physics}, 24:\penalty0
  {1--44}, 2011.

\bibitem[Strocchi(1966)]{article:complxcoordandqm-strocchi}
F.~Strocchi.
\newblock Complex {C}oordinates and {Q}uantum {M}echanics.
\newblock \emph{Rev. Mod. Phys.}, 38\penalty0 (1):\penalty0 {36--40}, 1966.

\bibitem[Sussman(1973)]{article:sussmanvectfldsandintdist}
H.~Sussman.
\newblock Orbits of families of vector fields and integrability of
  distributions.
\newblock \emph{Trans. Amer. Math. Soc.}, 180:\penalty0 171{--}188, 1973.

\bibitem[Weinstein(1996)]{article:weinsteinlagrngdynandliegrpds}
A.~Weinstein.
\newblock Langrangian mechanics and groupoids.
\newblock \emph{Fields Inst. Commun.}, 7:\penalty0 207{--}231, 1996.

\bibitem[Woodhouse(1994)]{woodhouse_1994}
N.~M.~J. Woodhouse.
\newblock \emph{Geometric quantization}.
\newblock Clarendon Press, Oxford, 1994.

\bibitem[Zurek(2003)]{online:decohereinselandqmorigofclassic}
W.~H. Zurek.
\newblock Decoherence, {E}inselection, and the {Q}uantum {O}rigins of the
  {C}lassical, {19 Jun} 2003.
\newblock arXiv:quant-ph/0105127v3.

\end{thebibliography}
\end{document}